\newcommand{\expect}{\operatorname{E}}
\newcommand{\calR}{{\cal{R}}}
\newcommand{\R}{{\mathbb{R}}}
\newtheorem{remark}{Remark}[section]
\newcommand{\rd}{\mathrm{d}}
\newcommand{\rds}{\,\rd}
\newcommand\independent{\protect\mathpalette{\protect\independenT}{\perp}}
\def\independenT#1#2{\mathrel{\rlap{$#1#2$}\mkern2mu{#1#2}}}
\newtheorem{assumption}{Assumption}
\newtheorem{proposition}{Proposition}
\begin{document}

\begin{frontmatter}

\title{A novel quantile-based decomposition of the indirect effect in mediation analysis with an application to infant mortality in the US population}
\runtitle{Quantile-based mediation analysis}

\begin{aug}
\author{\fnms{Marco} \snm{Geraci}\corref{}\thanksref{t1,m1}\ead[label=e1]{geraci@mailbox.sc.edu}}
\and
\author{\fnms{Alessandra} \snm{Mattei}\thanksref{t2}\ead[label=e2]{mattei@disia.unifi.it}}

\thankstext{t1}{Corresponding author: Marco Geraci, Department of Epidemiology and Biostatistics, Arnold School of Public Health, University of South Carolina, 915 Greene Street, Columbia SC 29209, USA. \printead{e1}}

\thankstext{m1}{Supported in part by funding received from the National Institutes of Health -- National Institute of Child Health and Human Development (Grant Number: 1R03HD084807-01A1).}

\runauthor{Geraci and Mattei}

\affiliation{University of South Carolina\thanksmark{t1} and University of Florence\thanksmark{t2}}



\end{aug}

\begin{abstract}
\quad In mediation analysis, the effect of an exposure (or treatment) on an outcome variable is decomposed into two components: a direct effect, which pertains to an immediate influence of the exposure on the outcome, and an indirect effect, which the exposure exerts on the outcome \textit{through} a third variable called mediator. Our motivating example concerns the relationship between maternal smoking (the exposure, $X$), birthweight (the mediator, $M$), and infant mortality (the outcome, $Y$), which has attracted the interest of epidemiologists and statisticians for many years. We introduce new causal estimands, named $u$-specific direct and indirect effects, which describe the direct and indirect effects of the exposure on the outcome at a specific quantile $u$ of the mediator, $0 < u < 1$. Under sequential ignorability we derive an interesting and novel decomposition of $u$-specific indirect effects. The components of this decomposition have a straightforward interpretation and can provide new insights into the complexity of the mechanisms underlying the indirect effect. We illustrate the proposed methods using data on infant mortality in the US population. We provide analytical evidence that supports the hypothesis that the risk of sudden infant death syndrome is not predicted by changes in the birthweight distribution.
\end{abstract}


\begin{keyword}
\kwd{Birthweight paradox}
\kwd{resampling large datasets}
\kwd{vital statistics}
\end{keyword}

\end{frontmatter}


\section{Introduction}
\label{sec:1}
In epidemiological research, assessing the effect of an exposure ($X$) on the outcome of interest ($Y$) is often done in relation to a third variable ($M$), called intermediate variable or mediator, which is suspected to lie on the causal pathway between the exposure and the outcome under investigation. Mediation analysis is a popular statistical approach to this kind of problems and occupies an important place in observational studies. The goal of mediation analysis is to tease out direct and indirect effects of an exposure on the outcome.

We consider the potential outcome approach to causal inference as proposed by \cite{Rubin1974, Rubin1990} and define channeled and un-channeled effects by means of natural direct and indirect effects \citep{RobinsGreenland1992, Pearl2001}. Natural direct and indirect effects describe what would happen to the relationship between exposure and outcome when the mediator is intervened upon.

In mediation analysis, a sequential ignorability assumption is usually invoked, which implies that conditional on pretreatment covariates, say $W$, there is no unmeasured confounding of the treatment-mediator, treatment-outcome and mediator-outcome relationships. Under this assumption, the analysis of the data revolves around the conditional distributions of $M|X,W$ and of $Y|X,M,W$. Most of the literature on mediation analysis concerns location-shift effects, which is tantamount to the application of (generalized) linear regression to model the relationship between $X$ and the conditional mean of $M$, or between $(X,M)$ and the conditional mean of $Y$, given $W$. A few exceptions include \cite{imai_2010a} and \cite{shen_etal_2014}, who proposed assessing the effects of the exposure on the conditional quantiles of the outcome, and \cite{dominici_etal}, who considered conditional quantiles of the mediator when estimating direct and indirect effects.

In this paper, our primary interest does not involve how the distribution of the outcome is summarized nor this issue is relevant to the forthcoming discussion. Although we choose the expected value as parameter of interest, our results extend to any outcome model, including \citeauthor{imai_2010a}'s (\citeyear{imai_2010a}) model for `quantile causal mediation effects' and \citeauthor{shen_etal_2014}'s (\citeyear{shen_etal_2014}) `quantile outcome model'. In contrast, our goal is to provide a break down of the natural indirect effects by using an alternative formulation based on the quantiles of the mediator model. In this regard, our proposal is novel and, to the best of our knowledge, there are no published studies comparable to ours. Specifically, we first order the potential values of the mediator under alternative treatment conditions according to their cumulative probabilities, say $u$, where $u \in (0,1)$. Then we show that the (overall) average indirect effect can be written as the average of what we labelled \textit{$u$-specific indirect effects}, that is, the expected indirect effects of the treatment on the outcome given the quantiles of the mediator. Similarly, we show that the (overall) average total effect and the (overall) average natural direct effect can be written as averages of conditional expected total and direct effects given the quantiles of the mediator, which we refer to as \textit{$u$-specific total and direct effects}, respectively. In this regard, our study offers another element of novelty as compared to \citeauthor{dominici_etal}'s (\citeyear{dominici_etal}) paper, namely a direct link between average causal effects and quantile-specific effects.

We make the assumption of sequential ignorability \citep[e.g.,][]{Pearl2001,dominici_etal,VanderWeeleVansteelandt2009,imai_2010a}, which allows us to identify and estimate average natural direct and indirect effects from the observed data. Nevertheless our focus is on the $u$-specific effects. Under sequential ignorability, we investigate the information provided by the data on $u$-specific indirect effects, and, by using the formulation of the average natural indirect effect as average of $u$-specific indirect effects, we obtain a novel decomposition of the indirect effect, which provides additional valuable and easy-to-interpret information about the mediating process. The crucial quantities involved in the decomposition are the conditional quantile function of the mediator given the exposure, the density of the mediator's distribution, and the sensitivity of the outcome to changes in the mediator's distribution, all of which we propose to estimate using a distribution-free approach.

We illustrate our approach, including modelling and inferential strategies, using a dataset with several millions of observations on birthweights and infant mortality in the US population.

\section{Smoking, birthweight and infant mortality}
\label{sec:2}
Our motivating example is represented by maternal smoking during pregnancy (the exposure), birthweight (the mediator) and infant mortality (the primary outcome). We consider data on livebirths and infant deaths (i.e., deaths of children younger than one year of age) in white singletons born in the United States (US) between 2001 and 2005 obtained from the National Center for Health Statistics (NCHS). These data contain birth-cohort linked information on cause of death, age at death, birthweight, maternal smoking (including approximate number of cigarettes smoked daily), and other medical and sociodemographic characteristics systematically recorded on US birth certificates.

Tobacco smoking during pregnancy increases the risk of a number of adverse outcomes, including miscarriage, placental abruption, preterm delivery, and sudden infant death syndrome (SIDS). It is also known that the effect of smoking on birthweight is to shift the entire distribution to the left, thus increasing the risk of low birthweight (LBW) ($< 2500$ gr). However, the shift is not uniform across birthweight quantiles but stronger on lower quantiles and weaker on upper quantiles \citep{abrevaya_2001, koenker_hallock, geraci_2016}. Finally, numerous studies have linked infant mortality to birthweight \citep{wilcox_1983}. Mortality is highest in infants born very small and it decreases monotonically with increasing birthweight (except in macrosomic infants for whom mortality tends to rise). The International Classification of Diseases (ICD), Tenth Edition \citep{icd10}, lists `Extremely low birthweight newborn' (P07.0) ($< 1000$ gr) and `Other low birth weight newborn' (P07.1) ($1000$--$2500$ gr) as causes of morbidity or additional care in infants. LBW may also be related to diseases in later life. Using a life course perspective \citep{destavola}, epidemiology studies are now investigating the association of LBW with adverse health outcomes in childhood  \cite[e.g., in relation to cancer risks,][]{koifman_2008,birch_2010} as well as in adulthood (the `fetal origins' hypothesis) \citep{godfrey_2000,huxley_2002,huxley_2004}.

What needs to be clarified, yet, is the mediating role of birthweight in the association between smoking and infant mortality risk. \cite{wilcox_2001} speculated that the effect of smoking is to shift the birthweight distribution and the mortality curve to the left, uniformly at all birthweight quantiles. This would imply that the impact of smoking on mortality is independent of its effect on birthweight, i.e., that the indirect effect is null. In a related commentary \citep{picciotto_2001}, it was noted that the very results reported by \cite{wilcox_2001} were actually pointing in the opposite direction, i.e. towards a non-uniform effect of smoking on mortality. However, no analytical evidence was provided to support either claims.

It has also been suggested that there might be common causes of LBW and mortality (e.g., birth defects), which usually remain unobserved to the analyst \citep{hernandez_2006}. In other words, birthweight \textit{per se} is not a causative factor, but rather an endpoint of prenatal biological mechanisms. Gestational age and birthweight for gestational age are often preferred for predicting health outcomes. However, birthweight still has a role in predicting health outcomes as a proxy of unmeasured processes \citep{picciotto_2001}. Moreover, birthweight is cheap to measure, which is especially advantageous in low and middle income countries, and is measured more accurately than gestational age, since timing of ovulation and conception are often uncertain. As a consequence, missing and misclassified data are usually higher in proportion for gestational age than for birthweight, and, even worse, they are more common in women that are more likely to give birth to infants at higher risk of mortality \citep{wilcox_2001}.

These facts make smoking, birthweight, and infant mortality a perfect candidate for a mediation analysis. However, due to the reasons discussed above, statistical methods that deal only with the location-shift effect of the exposure (smoking) on the mediator (birthweight) might be missing out on important clues, which would help understand the mediating process.

\section{Methods}
\label{sec:3}
\subsection{Potential outcomes and causal estimands}
\label{sec:3.1}
We consider the simple case in which the exposure variable, $X$, is dichotomous, namely $X=0$ in the unexposed population and $X=1$ in the exposed population. The outcome $Y$ can be either discrete or continuous. We focus on an absolutely continuous intermediate variable $M$. We denote by $W$ a vector of observed pretreatment variables.

Under the stable unit treatment value assumption (SUTVA) \citep{Rubin1990}, which rules out the presence of different versions of each treatment level and interference between units, we can define two potential outcomes for each post-treatment variable. Let $M(x)$ and $Y(x)$ denote respectively the potential outcomes of $M$ and $Y$ if treatment $X$ were set, possibly contrary to fact, to the value $x$, $x=0,1$. Under an appropriate version of SUTVA \citep[see, for example,][]{MatteiMealli2011}, we can define the following potential outcomes:
\begin{itemize}
\item $Y(x,m)$, which would be the value of the outcome $Y$ if the treatment were set to the level $x$ and the mediator $M$ were set to a specific prefixed value, $m$; and
\item $Y(x, M(x^\ast))$, which would be the value of the outcome $Y$ if the treatment were set to the level $x$ and the mediator $M$ were set to the value it would have taken if the treatment had been set to an alternative level, $x^\ast$. Under the composition assumption \citep{VanderWeele2015},  $Y(x, M(x)) = Y(x)$.
\end{itemize}

We concentrate on causal effects defined by the differences of potential outcomes. Therefore the average total causal effect (ACE) of the treatment $X$ on the outcome $Y$ is defined as the mean difference between potential outcomes:
\begin{align}\label{eq:1}
ACE & = \expect\left(Y(1)-Y(0)\right)\\
\nonumber &= \expect_W\left\{\expect\left(Y(1)-Y(0)\mid W=w\right)  \right\} = \expect_W\left(ACE_w \right),
\end{align}
where the outermost expectation is over the distribution of the pretreatment covariates and $ACE_w=\expect\left(Y(1)-Y(0)\mid W=w\right)$ is the average total causal effect conditional on covariates at level $W=w$.

In the presence of an intermediate variable $M$ it could be of interest decomposing the total effect of the exposure $X$ on the outcome $Y$  into a channeled (indirect) effect  mediated through $M$, and an un-channeled (direct) effect, that is, an effect not mediated through $M$.  Here we consider natural indirect (NIE) and direct (NDE) effects. The conditional average NIE and NDE given $W=w$ are defined by
\begin{align}
\label{eq:2} NIE_{x \mid w } & = \expect\left(Y(x, M(1))  - Y(x, M(0)) \mid W=w\right),\\
\label{eq:3} NDE_{x^\ast \mid w} & = \expect\left(Y(1, M(x^\ast)) - Y(0, M(x^\ast)) \mid W=w\right),
\end{align}
for $x,x^\ast=0,1$ \citep[see, e.g,][]{RobinsGreenland1992, Pearl2001}. Then, the average total effect conditional on $W = w$ can be decomposed into the sum of a natural indirect effect and a natural direct effect as follows:
\begin{equation*}
ACE_w = NIE_{0\mid w} + NDE_{1\mid w}= NID_{1\mid w} + NDE_{0\mid w}.
\end{equation*}
For convenience, all the effects are defined conditional on covariates, but it is worth noting that $NIE_{x}=\expect_W\left\{NIE_{x \mid w }\right\}$ and $NDE_{x^\ast}=\expect_W\left\{NDE_{x^\ast \mid w }\right\}$ and that the following decomposition of the total effect holds: $ACE = NIE_{0} + NDE_{1}= NIE_{1} + NDE_{0}$.

Let's define the function
\begin{equation}\label{eq:4}
\calR_{x,x^\ast}(m \mid w) = \expect\left(Y(x,m) \mid M(x^\ast)=m, W=w\right),
\end{equation}
for $m \in \R$, $x, x^\ast=0,1$. The interpretation of $\calR_{x,x^\ast}(m \mid w)$ is straightforward. For example, if we refer to the NCHS birthweight study introduced previously, then $\calR_{1,x^\ast}(m \mid w)$ describes the expected infant mortality that would result if infants in the subpopulation $W = w$ were exposed to smoking, as a function of the potential birthweights that could be observed if mothers either smoked ($x^\ast = 1$) or did not smoke ($x^\ast = 0$) during pregnancy.

By applying the law of iterated expectations we obtain
\begin{align*}
\lefteqn{\expect\left(Y(x, M(x^\ast)) \mid W=w\right) }\\&=\int_{\mathbb{R}}\expect\left(Y(x,m) \mid M(x^\ast)=m, W=w\right)\rds F_{M(x^\ast)\mid W=w}(m).
\end{align*}
Then, we can rewrite the natural indirect and direct effects  as follows
\begin{align}
\label{eq:5} NIE_{x\mid w} = & \int_{\mathbb{R}} \calR_{x,1}\left(m\mid w\right)\rds F_{M(1)\mid W=w}(m)\\
\nonumber & -  \int_{\mathbb{R}} \calR_{x,0}\left(m\mid w\right)\rds F_{M(0)\mid W=w}(m),\\
\label{eq:6} NDE_{x^\ast\mid w} = &\int_{\mathbb{R}} \calR_{1,x^\ast}\left(m \mid w\right)\rds F_{M(x^\ast)\mid W=w}(m)\\
\nonumber & - \int_{\mathbb{R}}\calR_{0,x^\ast}\left(m \mid w\right)\rds F_{M(x^\ast)\mid W=w}(m).
\end{align}

\subsection{Rank ordering}
\label{sec:3.2}
Let $u = F_{M(x^\ast)\mid W=w}(m)$, for $m \in \mathbb{R}$, and let $\xi^{x^\ast}_{u} = F^{-1}_{M(x^\ast)\mid W=w}(u)$ be the $u$th quantile of $M(x^\ast) \mid W=w$, for $u \in (0,1)$. Also, let $U_{M(x^\ast) \mid W=w} = F_{M(x^\ast)\mid W=w}(M(x^\ast)\mid W=w)$ be the rank transform of $M(x^\ast)$ given $W=w$, which follows a standard uniform distribution. Similarly to \eqref{eq:4}, we define the function
\begin{equation}\label{eq:7}
R_{x,m,x^\ast}(u\mid w) = \expect(Y(x,m) \mid  U_{M(x^\ast)\mid W=w} = u), \quad 0 < u < 1.
\end{equation}
The functions in \eqref{eq:4} and \eqref{eq:7} have both the same purpose. However, they are interpreted differently. Expression \eqref{eq:4} concerns the conditional expectation of $Y(x,m)$ given a fixed value $m$ of $M(x^\ast)$, which may correspond to different quantile levels (probabilities) of the mediator under treatment ($x^\ast=1$) or under control ($x^\ast=0$). In contrast, expression \eqref{eq:7} concerns the conditional expectation of $Y(x,m)$ given a fixed quantile level $u$, which may correspond to different values of the mediator under treatment  ($x^\ast=1$) or under control  ($x^\ast=0$) .

\begin{remark}
We can establish a relationship between \eqref{eq:4} and \eqref{eq:7} as a consequence of the one-to-one relationship between $u$ and $\xi^{x^\ast}_{u}$. In particular, we can assume that the function $R_{x, x^\ast}(\cdot \mid w)$ is the result of the composition $\calR_{x, x^\ast} \circ F^{-1}_{M(x^\ast)\mid W=w}$ such that $R_{x, \xi^{x^\ast}_{u}, x^\ast}(u \mid w) = \mathcal{R}_{x,x^\ast}\left(\xi^{x^\ast}_{u} \mid w\right)$ for every $u = F_{M(x^\ast) \mid W=w}(\xi^{x^\ast}_{u})$.
\end{remark}

The natural indirect effect in \eqref{eq:5} can be written as follows:
\begin{align}\label{eq:8}
NIE_{x \mid w} = & \int_{\mathbb{R}} \calR_{x,1}\left(m\mid w\right)\rds F_{M(1)\mid W=w}(m) - \int_{\mathbb{R}}\calR_{x,0}\left(m\mid w\right)\rds F_{M(0)\mid W=w}(m)\\
\nonumber = & \int_{0}^1 \calR_{x,1}\left(F^{-1}_{M(1)\mid W=w}(u)\mid w\right)\rds u - \int_{0}^1 \calR_{x,0}\left(F^{-1}_{M(0)\mid W=w}(u) \mid w\right) \rds u\\
\nonumber = & \int_{0}^1 \calR_{x,1}\left(F^{-1}_{M(1)\mid W=w}(u) \mid w\right) - \calR_{x,0}\left(F^{-1}_{M(0)\mid W=w}(u) \mid w\right) \rds u\\
\nonumber = & \int_{0}^1 R_{x,\xi^{1}_{u}, 1}\left\{F_{M(1)\mid W=w}\left(F^{-1}_{M(1)\mid W=w}(u)\right) \mid w\right\}\\
\nonumber & - R_{x,\xi^{0}_{u}, 0}\left\{F_{M(0)\mid W=w}\left(F^{-1}_{M(0)\mid W=w}(u)\right) \mid w\right\}\rds u,
\end{align}
where the second equality follows from the substitution $u= F_{M(x^\ast)\mid W=w}(m)$, $x^\ast=0,1$; the third equality follows from the property of linearity of integrals; and the fourth equality follows from the relationship
between $R_{x,\xi^{x^\ast}_{u},x^\ast}\left( u \mid w\right)$ and $\calR_{x,x^\ast}\left(\xi_u^{x^\ast} \mid w \right)$ and the identity
\begin{equation}\label{eq:9}
R_{x,\xi^{x^\ast}_{u},x^\ast}\left( u \mid w \right) = R_{x,\xi^{x^\ast}_{u},x^\ast}\left\{ F_{M(x^\ast)\mid W=w}\left(F_{M(x^\ast)\mid W=w}^{-1}\left(u\right)\right)  \mid w \right\}.
\end{equation}

Consider the last integrand of the $NIE_{x \mid w}$ in \eqref{eq:8} and define
\begin{align*}
NIE_{x\mid u, w} = & \,\, R_{x,\xi^{1}_{u},1}\left\{F_{M(1)\mid W=w}\left(F^{-1}_{M(1)\mid W=w}(u)\right) \mid w\right\}\\
& - R_{x,\xi^{0}_{u},0}\left\{F_{M(0)\mid W=w}\left(F^{-1}_{M(0)\mid W=w}(u)\right)  \mid w\right\}.
\end{align*}
This quantity can be interpreted as the natural indirect effect of $X$ on $Y$ at the quantile $u$ of $M(x^\ast) \mid W=w$, to which we refer as the \textit{$u$-specific indirect effect} of $X$ on $Y$, conditional on $W$. By using similar arguments as above, we also obtain
\begin{align}\label{eq:10}
ACE_w  = & \expect\left(Y(1) \mid W=w\right) - \expect\left(Y(0)\mid W=w\right)\\
\nonumber = & \expect\left(Y(1, M(1))\mid W=w\right) - \expect\left(Y(0, M(0))\mid W=w\right)\\
\nonumber = & \int_{\mathbb{R}} \calR_{1,1}(m\mid w) \rds F_{M(1)\mid W=w}(m\mid w) -\int_{\mathbb{R}}  \calR_{0,0}(m\mid w)  \rds F_{M(0)\mid W=w}(m)\\
\nonumber = & \int_{0}^1 R_{1, \xi_u^1,1} \left\{F_{M(1)\mid W=w}(F^{-1}_{M(1)\mid W=w}(u))\mid w\right\}\\
\nonumber & - R_{0, \xi_u^0,0}\left\{F_{M(0)\mid W=w}(F^{-1}_{M(0)\mid W=w}(u)) \mid w\right\}  \rds u.
\end{align}
We refer to the last integrand in \eqref{eq:10}
\begin{align*}
ACE_{u,w} = & \,\, R_{1, \xi_u^1,1} \left\{F_{M(1)\mid W=w}(F^{-1}_{M(1)\mid W=w}(u))\mid w\right\}\\
& - R_{0, \xi_u^0,0}\left\{F_{M(0)\mid W=w}(F^{-1}_{M(0)\mid W=w}(u)) \mid w\right\}
\end{align*}
as the \textit{$u$-specific total effect} of $X$ on $Y$, and we define the \textit{$u$-specific direct effect} as the difference between the $u$-specific total effect and $u$-specific indirect effect, that is,
\begin{align*}
NDE_{x^\ast\mid u,w} = & \,\, R_{1, \xi_u^{x^\ast},x^\ast} \left\{F_{M(x^\ast)\mid W=w}(F^{-1}_{M(x^\ast)\mid W=w}(u))\mid w\right\}\\
& - R_{0, \xi_u^{x^\ast},x^\ast}\left\{F_{M(x^\ast)\mid W=w}(F^{-1}_{M(x^\ast)\mid W=w}(u)) \mid w \right\}.
\end{align*}

\subsection{Identifiability}
\label{sec:3.3}
To identify and estimate natural direct and indirect effects, sequentially ignorability assumptions are usually invoked \cite[e.g.,][]{Pearl2001,VanderWeeleVansteelandt2009,imai_2010a}. Throughout the paper, we make the following assumption.
\begin{assumption}[Sequential Ignorability \citep{imai_2010a}]\label{assm:1}
\begin{minipage}[t]{1\textwidth}
     \begin{itemize}
		\item[(i)] Ignorability of the treatment: $\left(Y(x,m), M(x^{\ast})\right) \independent  X \mid W$, for $x,x^{\ast}=0,1$ and for all $m \in \R$.
		\item[(ii)] Ignorability of the mediator: $Y(x,m) \independent M(x^{\ast}) \mid X=x^{\ast}, W$, for $x,x^{\ast}=0,1$ and for all $m \in \R$.
	\end{itemize}
\end{minipage}
\end{assumption}

Let $X$, $M = M(X) = X\cdot M(1) + (1-X)\cdot M(0)$, and $Y = Y(X) = X\cdot Y(1) + (1-X)\cdot Y(0)$ be, respectively, the actual treatment, observed value of the mediator, and observed value of the outcome. Also, let's define the function
\begin{equation} \label{eq:11}
	\calR(x, m, w)=\expect\left(Y \mid X=x, M=m, W=w\right),
\end{equation}
which describes the average outcome in the subpopulation with covariates level $w$ exposed to treatment $x$ as a function of the mediator. For instance, in the NCHS birthweight study, $\calR(1, m, w)$ describes the mortality risk for infants in the subpopulation $W=w$ exposed to maternal smoking, as a function of birthweight. Analogously, $\calR(0,m,w)$ describes the mortality risk for unexposed infants.

Under Assumption \ref{assm:1},
$$
\calR_{x, x^\ast}(m \mid  w)=\calR(x, m, w)
$$
for $x^\ast=0,1$, and for each $x=0,1$, $m \in \R$ and $w$ (see Appendix), and the mediation formula \citep{Pearl2001} holds:
\begin{align*}
& \expect \left(Y(x, M(x^\ast)) \mid W=w\right) \\
& \quad = \int_{\mathbb{R}} \mathrm{E}\left(Y \mid X=x, M=m, W=w\right) \rds F_{M\mid X=x^\ast, W=w}(m)\\
& \quad = \int_{\mathbb{R}} \calR(x, m, w) \rds F_{M\mid X=x^\ast, W=w}(m)
\end{align*}
where $F_{M| X=x^\ast, W=w}(m)$, for $m \in \mathbb{R}$, is the cumulative distribution function of $M\mid  X=x^\ast, W=w$. Therefore, the natural indirect effect $NIE_{x \mid w}$ is identifiable from the observed data and is calculated as
\begin{align}\label{eq:12}
NIE_{x\mid w} = & \int_{\R}  \expect\left(Y \mid X=x, M=m, W=w\right) \rds F_{M\mid X=1, W=w}(m) \\
\nonumber & - \int_{\R}  \expect\left(Y \mid X=x, M=m, W=w\right) \rds F_{M\mid X=0, W=w}(m)\\
\nonumber = & \int_{\mathbb{R}} \calR(x, m, w) \rds F_{M\mid X=1, W=w}(m)- \int_{\mathbb{R}} \calR(x, m, w) \rds F_{M\mid X=0, W=w}(m).
\end{align}

Similar to \eqref{eq:11}, we now define the function:
\begin{equation}\label{eq:13}
R(x,x^\ast,u,w)=\expect\left(Y \mid X=x, U_{M\mid X=x^\ast, W=w} =u, W=w\right)
\end{equation}
where $U_{M\mid X=x^\ast, W=w}$ is the rank transform of $M\mid X=x^\ast, W=w$, which follows a standard uniform distribution.
We have that
$$
R(x,x^\ast,u,w) = \calR(x,\xi_{u\mid x^\ast},w)
$$
for every $u=F_{M\mid X=x^\ast, W=w}(\xi_{u\mid x^\ast})$, where   $\xi_{u\mid x^\ast} = F^{-1}_{M\mid X=x^\ast, W=w}(u)$ is the $u$th quantile of the random variable $M\mid X=x^\ast, W=w$  for all $u \in (0,1)$.

The natural indirect effect in \eqref{eq:12} can be then written as follows:
\begin{align}\label{eq:14}
NIE_{x|w} = &  \int_{\mathbb{R}} \calR\left(x,m,w\right) \rds F_{M\mid X=1, W=w}(m) - \int_{\mathbb{R}} \calR\left(x,m,w\right)\rds F_{M\mid X=0, W=w}(m) \\
\nonumber = & \int_{0}^1 \calR\left(x,F^{-1}_{M\mid X=1, W=w}(u),w\right)\rds u - \int_{0}^1 \calR\left(x,F^{-1}_{M\mid X=0, W=w}(u),w\right) \rds u \\
\nonumber = & \int_{0}^1 \calR\left(x,F^{-1}_{M\mid X=1, W=w}(u),w\right) -  \calR\left(x,F^{-1}_{M\mid X=0, W=w}(u),w\right) \rds u\\
\nonumber = & \int_{0}^1 R\left\{x,1,F_{M\mid X=1, W=w}\left(F^{-1}_{M\mid X=1, W=w}(u)\right), w\right\} \\
\nonumber & - R\left\{x,0,F_{M\mid X=0, W=w}\left(F^{-1}_{M\mid X=0, W=w}(u)\right),w \right\}\rds u,
\end{align}
where the second equality follows from the substitution $u= F_{M\mid X=x^\ast, W=w}(m)$, $x^\ast=0,1$; the third equality follows from the property of linearity of integrals; and the fourth equality follows from the relationship
between $R\left(x,x^\ast, u,w \right)$ and $\calR \left(x,\xi_{u \mid x^\ast},w \right)$ and the identity
\begin{equation}\label{eq:15}
R\left(x, x^\ast, u, w\right) = R \left\{x, x^\ast, F_{M \mid X=x^\ast, W=w}\left(F_{M\mid X=x^\ast, W=w}^{-1}\left(u\right)\right),w \right\}.
\end{equation}
Under Assumption \ref{assm:1}, $R_{x,\xi^{x^\ast}_u,x^\ast}(u \mid w)=R(x,x^\ast,u,w)$. Therefore, the following proposition holds:
\begin{proposition}\label{theo:1}
Under Assumption \ref{assm:1}, we have
\begin{align}\label{eq:16}
NIE_{x|u, w} = & \,\, R\left\{x,1,F_{M\mid X=1, W=w}\left(F^{-1}_{M\mid X=1,  W=w}(u)\right),w\right\}\\
\nonumber & - R\left\{x,0,F_{M\mid X=0, W=w}\left(F^{-1}_{M\mid X=0,  W=w}(u)\right), w\right\}.
\end{align}
\end{proposition}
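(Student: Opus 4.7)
The plan is to bridge the potential-outcomes definition of $NIE_{x \mid u, w}$ (the integrand of \eqref{eq:8}) with the observed-data expression on the right-hand side of \eqref{eq:16} by invoking Sequential Ignorability in two places: once to identify the potential mediator distributions $F_{M(x^\ast) \mid W=w}$ with the conditional distributions $F_{M \mid X=x^\ast, W=w}$, and once to identify $\calR_{x,x^\ast}(m \mid w)$ with $\calR(x,m,w)$. Both identifications are either stated or easily derivable from material earlier in the paper, so the bulk of the work is assembling them correctly.

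First, I would use Assumption \ref{assm:1}(i). Because $M(x^\ast) \independent X \mid W$ and $M = M(x^\ast)$ when $X = x^\ast$ by consistency, the conditional law of $M(x^\ast)$ given $W=w$ coincides with that of $M$ given $X=x^\ast, W=w$. Hence $F_{M(x^\ast) \mid W=w}(m) = F_{M \mid X=x^\ast, W=w}(m)$ for all $m \in \R$, and in particular $\xi^{x^\ast}_u = \xi_{u \mid x^\ast}$ for every $u \in (0,1)$.

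Second, I would invoke the identification $\calR_{x,x^\ast}(m \mid w) = \calR(x, m, w)$ displayed just after \eqref{eq:11} (and proved in the Appendix using both parts of Assumption \ref{assm:1}). Combining this with Remark 1, which yields $R_{x,\xi^{x^\ast}_u,x^\ast}(u \mid w) = \calR_{x,x^\ast}(\xi^{x^\ast}_u \mid w)$, and with the analogous identity $R(x,x^\ast,u,w) = \calR(x,\xi_{u \mid x^\ast}, w)$ stated after \eqref{eq:13}, the two compositions collapse onto each other:
\begin{equation*}
R_{x,\xi^{x^\ast}_u,x^\ast}(u \mid w) = \calR_{x,x^\ast}(\xi^{x^\ast}_u \mid w) = \calR(x, \xi_{u \mid x^\ast}, w) = R(x,x^\ast,u,w).
\end{equation*}

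Finally, I would substitute these two equivalences into the integrand of \eqref{eq:8} that defines $NIE_{x \mid u, w}$, applying identity \eqref{eq:9} term by term; the result is exactly \eqref{eq:16}. The main difficulty is not analytic but notational: one must keep careful track of the three different indexings of the quantile ($\xi^{x^\ast}_u$, $\xi_{u \mid x^\ast}$, and the nested $F(F^{-1}(u))$ form), and verify that each replacement of $F_{M(x^\ast) \mid W=w}$ by $F_{M \mid X=x^\ast, W=w}$ is legitimate by virtue of the distributional equality established in the first step.
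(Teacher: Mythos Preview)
Your proposal is correct and follows essentially the same route as the paper's own proof: first use ignorability of the treatment to identify $F_{M(x^\ast)\mid W=w}$ with $F_{M\mid X=x^\ast,W=w}$ (hence $\xi^{x^\ast}_u=\xi_{u\mid x^\ast}$), then invoke $\calR_{x,x^\ast}(m\mid w)=\calR(x,m,w)$ together with Remark~1 and the identity after \eqref{eq:13} to obtain $R_{x,\xi^{x^\ast}_u,x^\ast}(u\mid w)=R(x,x^\ast,u,w)$, and finally substitute into the definition of $NIE_{x\mid u,w}$ using \eqref{eq:9} and \eqref{eq:15}. The only minor difference is that the paper spells out the chain of equalities establishing $\calR_{x,x^\ast}(m\mid w)=\calR(x,m,w)$ explicitly, whereas you cite it as already shown; either is fine.
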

The proof of Proposition~\ref{theo:1} is given in Appendix.

\subsection{Decomposition of the indirect effects}
\label{sec:3.4}

Given Assumption~\ref{assm:1} and expression \eqref{eq:16}, the $u$-specific natural indirect effect is equal to the derivative of \eqref{eq:15} with respect to $x^\ast$. To illustrate this fact, it may be instructive to temporarily consider $R\left(x,x^\ast,u, w\right)$ as a differentiable function of a continuous variable $x^\ast$. By the chain rule,  the first derivative of $R(\cdot,x^\ast,\cdot,\cdot)$ with respect to $x^\ast$ gives

\begin{align}\label{eq:17}
& \dfrac{\rds R\left\{x,x^\ast,F_{M\mid X=x^\ast, W=w}\left(F^{-1}_{M\mid X=x^\ast,W=w}(u)\right),w\right\}}{\rds x^\ast}\\
\nonumber & \quad = \dfrac{\rds R(x,x^\ast, u, w)}{\rds u} \cdot  \dfrac{\rds F_{M\mid X=x^\ast, W=w}\left(F^{-1}_{M\mid X=x^\ast, W=w}(u) \right)}{\rds F^{-1}_{M\mid X=x^\ast, W=w}(u)} \cdot \dfrac{\rds  F^{-1}_{M\mid X=x^\ast, W=w}(u)  }{\rds  x^\ast} \nonumber\\
\nonumber & \quad =  r(x,x^\ast, u,w) \cdot [s( u,x^\ast, w)]^{-1} \cdot q(u, x^\ast, w),
\end{align}
where
\begin{itemize}
\item $r(x,x^\ast, u, w)$ denotes the derivative of $R(x,x^\ast, u, w)$ with respect to $u$,
\item $s(u, x^\ast, w) =  \rds F^{-1}_{M\mid X=x^\ast, W=w}(u)/ \rds u$ is the derivative of the conditional quantile function of $M$ given $X=x^\ast$ and $W=w$ with respect to $u$, and
\item $q(u, x^\ast, w) = \rds F_{M\mid X=x^\ast, W=w}^{-1}(u)/\rds x^\ast$ denotes the derivative of the conditional quantile function of $M$ given $X=x^\ast$ and $W=w$ with respect to $x^\ast$.
\end{itemize}

The function $s(u,x^\ast,w)$ is known as \textit{sparsity function} \citep{tukey_1965} or quantile-density function \citep{parzen_1979}, and from the identity
$$F_{M\mid X=x^\ast, W=w}\left(F_{M\mid X=x^\ast, W=w}^{-1}(u)\right) = u,$$
it follows that
$s(u,x^\ast,w) = f^{-1}_{M \mid X=x^\ast, W=w}\left(\xi_{u\mid x^\ast} \right)$. That is, the sparsity function is the reciprocal of the density function evaluated at the quantile of interest and it is used as a measure of local variability. Variability is higher where the data are more sparse (less dense) and, vice versa, lower where the data are less sparse (more dense).

In our exposition, $X$ is binary, therefore differentiation with respect to $x^\ast$ should be loosely interpreted as differencing between adjacent levels of $x^\ast$, with $\rds x^\ast=1$ and
\begin{align}\label{eq:18}
& \dfrac{\rds R\left\{x,x^\ast,F_{M\mid X=x^\ast, W=w}\left(F^{-1}_{M\mid X=x^\ast, W=w}(u)\right), w\right\}}{\rds x^\ast}\\
\nonumber & \quad = R\left\{x,1,F_{M\mid X=1, W=w}\left(F^{-1}_{M\mid X=1, W=w}(u)\right),w\right\}\\
\nonumber & \quad\quad - R\left\{x,0,F_{M\mid X=0, W=w}\left(F^{-1}_{M\mid X=0, W=w}(u)\right),w \right\}\\
\nonumber & \quad = NIE_{x|u, w}.
\end{align}
Note that, in the binary case, we have that $q(u, x^\ast, w) =  F^{-1}_{M\mid X=1, W=w}(u) - F^{-1}_{M\mid X=0, W=w}(u)$. Moreover, under Assumption~\ref{assm:1} (ignorability of the treatment), we have that
\begin{equation}\label{eq:19}
F^{-1}_{M(1)\mid W=w}(u) - F^{-1}_{M(0)\mid W=w}(u)  = F^{-1}_{M\mid X=1, W=w}(u) - F^{-1}_{M\mid X=0, W=w}(u).
\end{equation}
Therefore $q(u, x^\ast, w)$ is the $u$th quantile effect of $X$ on $M$ .

The connection between \eqref{eq:17} and \eqref{eq:18} is given by the following proposition:
\begin{proposition}\label{theo:2}
Let $R(\cdot,x^\ast,\cdot,\cdot) \in C^{1}$ be a continuously differentiable function of $x^\ast$, where $x^\ast = (\zeta^\ast - a)/(b - a)$ and $\zeta^\ast \in [a,b] \subseteq \R$. Then, there exists some $\tilde{x}^\ast \in (0,1)$ such that
\begin{align}\label{eq:20}
& \dfrac{\rds R\left\{x,x^\ast,F_{M\mid X=x^\ast, W=w}\left(F^{-1}_{M\mid X=x^\ast,W=w}(u)\right),w\right\}}{\rds x^\ast}\Bigg\rvert_{x^\ast = \tilde{x}^\ast}\\
\nonumber & \quad =  r(x,x^\ast, u,w)\bigg\rvert_{x^\ast = \tilde{x}^\ast} \cdot [s(u,x^\ast, w)]^{-1}\bigg\rvert_{x^\ast = \tilde{x}^\ast} \cdot q(u, x^\ast, w)\bigg\rvert_{x^\ast = \tilde{x}^\ast}\\
\nonumber & \quad = R(x,1,u,w) - R(x,0,u,w).
\end{align}
\end{proposition}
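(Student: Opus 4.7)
The plan is to combine the chain-rule identity already displayed in (17) with the mean value theorem applied on the reparameterized interval $[0,1]$. I would define $\phi(x^\ast) = R\{x, x^\ast, F_{M \mid X=x^\ast, W=w}(F^{-1}_{M \mid X=x^\ast, W=w}(u)), w\}$ for $x^\ast \in [0,1]$, where $x^\ast = (\zeta^\ast - a)/(b-a)$ as in the statement. The hypothesis $R(\cdot, x^\ast, \cdot, \cdot) \in C^{1}$, together with the implicit smoothness of the conditional quantile and distribution functions in $x^\ast$, makes $\phi$ continuously differentiable on $[0,1]$.

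Next, I would evaluate $\phi$ at the endpoints using identity (15): since $F_{M \mid X=x^\ast, W=w}(F^{-1}_{M \mid X=x^\ast, W=w}(u)) = u$ for every admissible $x^\ast$, one obtains $\phi(0) = R(x, 0, u, w)$ and $\phi(1) = R(x, 1, u, w)$. The mean value theorem applied to $\phi$ on $[0,1]$ then yields some $\tilde x^\ast \in (0,1)$ with $\phi'(\tilde x^\ast) = \phi(1) - \phi(0) = R(x, 1, u, w) - R(x, 0, u, w)$, which is the second equality in (20). For the first equality I would reproduce the chain rule expansion of (17) at $x^\ast = \tilde x^\ast$: propagating the dependence through $x^\ast \mapsto F^{-1}_{M \mid X=x^\ast, W=w}(u) \mapsto F_{M \mid X=x^\ast, W=w}(\cdot) \mapsto R(x, x^\ast, \cdot, w)$ produces the factors $q(u, \tilde x^\ast, w)$, the density $f_{M \mid X=\tilde x^\ast, W=w}(\xi_{u \mid \tilde x^\ast}) = [s(u, \tilde x^\ast, w)]^{-1}$, and $r(x, \tilde x^\ast, u, w)$, whose product is exactly the middle expression in (20).

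The main subtlety is the formal status of the chain rule in (17). Because the composition $F \circ F^{-1}$ is identically equal to $u$, a fully literal total derivative of $\phi$ would also generate contributions from the explicit $x^\ast$ in the second slot of $R$ and from the subscript dependence of $F$, and these extra terms must cancel in order for the single-pathway decomposition to represent $\phi'(x^\ast)$. The cleanest way around this is to keep the two subscripts $x^\ast$ in $F$ and $F^{-1}$ conceptually separated during differentiation, thereby isolating the quantile-density pathway on which the decomposition is built; identity (15) then guarantees that the value of $\phi'(\tilde x^\ast)$ still coincides with the finite difference supplied by the mean value theorem. The continuous reparameterization of $x^\ast$ on $[0,1]$ is essentially a device that licenses the use of calculus in a binary-treatment setting while returning the statement about $R(x, 1, u, w) - R(x, 0, u, w)$ that is consistent with (18).
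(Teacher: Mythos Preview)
Your argument is correct and essentially matches the paper's proof. The only cosmetic difference is that you invoke Lagrange's mean value theorem directly on $\phi$, whereas the paper applies the mean value theorem for definite integrals to $h(x^\ast)=\rds R(x,x^\ast,u,w)/\rds x^\ast$ and then the fundamental theorem of calculus to recover $R(x,1,u,w)-R(x,0,u,w)$; under the $C^{1}$ hypothesis these two routes are equivalent.
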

The proof of Proposition~\ref{theo:2} is given in Appendix.

\begin{remark}
In general, the value $\tilde{x}^\ast$ is not unique. However, if in addition $\rds R(\cdot,x^\ast,\cdot,\cdot)/\rds x^\ast$ is strictly monotonic, then $\tilde{x}^\ast$ is unique.
\end{remark}

The above proposition is based on a simplification of how a binary exposure is defined. We can interpret $\zeta^\ast$ as a latent exposure that gives rise to $x^\ast \in \{0,1\}$ according to a threshold mechanism. This may or may not be appropriate for some exposures. For example, in the NCHS birthweight study, $\zeta^\ast$ could be interpreted as the amount of nicotine or other toxic substances to which the fetus is exposed during pregnancy. The dichotomized $x^{\ast}$ may be considered as arising from a threshold mechanism of the type $I(\zeta^{\ast} > c)$, $c \in (0,\zeta^{\ast}_{\max})$. Note that the value $\tilde{x}^\ast$ may depend on $x$, $u$, and $w$. For simplicity, in our application (Section~\ref{sec:5}) we do not concern ourselves with the calculation of such values but, instead, introduce `average' approximations as explained in Section~\ref{sec:4}.

Under Assumption~\ref{assm:1}, Proposition \ref{theo:2} implies that we can decompose $NIE_{x\mid u}$, i.e. the $u$-specific natural indirect effect of $X$ on $Y$ at the quantile $u$ of $M$, into three components:
\begin{enumerate}
	\item $q(u, x^\ast, w)\bigg\rvert_{x^\ast = \tilde{x}^\ast} \equiv q(u, \tilde{x}^\ast, w)$, the $u$-quantile effect of the exposure on the mediator; 
	\item $[s(u,x^\ast, w)]^{-1}\bigg\rvert_{x^\ast = \tilde{x}^\ast} \equiv [s(u,\tilde{x}^\ast, w)]^{-1}$, the conditional density of the mediator given $X= \tilde{x}^\ast$ and $W=w$ at the quantile $u$;
	\item $r(x,x^\ast, u,w)\bigg\rvert_{x^\ast = \tilde{x}^\ast} \equiv r(x, \tilde{x}^\ast, u, w)$, the sensitivity of the conditional expected value of the outcome given $X=x, W=w$ to changes in the conditional distribution of the mediator given $X=\tilde{x}^\ast, W=w$ at $u$.
\end{enumerate}
Naturally, the stronger the $u$-quantile effect, the larger will be the indirect effect if $R(x, x^\ast, u,w)$ is sensitive near $u$. The density  $[s(u,\tilde{x}^\ast, w)]^{-1}$ amplifies (attenuates) the quantile effect in regions where the variability of the mediator is lower (higher).

\section{Models and inference}
\label{sec:4}

We now introduce the models and related inferential aspects. For our data analysis, we favour semi- and non-parametric approaches as we want some modelling flexibility. However, parametric alternatives, of which we mention a few, can be considered as well.

\subsection{Modelling the mediator}
\label{sec:4.1}
The first step in our modelling approach involves the conditional quantile function of the mediator. Suppose that the exposure $X$ and the $u$th quantile of $M$, adjusted for confounders $W$, are related according to a linear model of the type
\begin{equation}\label{eq:21}
F_{M\mid X=x,W=w}^{-1}(u) = \beta_{0}(u) + x\beta_{1}(u) + w'\gamma(u),
\end{equation}
where $\gamma(u)$ is a $p \times 1$ vector of coefficients and $w$ may contain interaction terms with $x$. It also follows that the marginal effect associated with $x$ is $q(u, x, w) = \beta_{1}(u) + \rds w'\gamma(u)/\rds x$. Note that the linear specification of the model implies that $q(u, x, w) = q(u, w)$ does not depend on $x$. If we assume that $x$ is the result of a dichotomization of a latent continuous exposure, this may represent a simplification of the true dose-response relationship between the latent exposure and the mediator. In practical situations, information on such a relationship may be unavailable or unreliable or costly to obtain. We further elaborate on this point with regard to the relationship between smoking and birthweight (Section~\ref{sec:5}).

Consider for a moment the case $\gamma(u) = 0$. Under ignorability of the treatment, since $X$ is binary, $\beta_{0}(u)$ is the $u$th quantile of $M$ in the unexposed population and $\beta_{1}(u)$ is the `quantile treatment effect' \citep{doksum_1974,lehmann_1975,koenker_xiao} on the mediator. In this case, it is straightforward to estimate these parameters using the sample quantiles $\hat{\beta}_{0}(u) = \hat{F}^{-1}_{M\mid X=0}\left(u\right)$ and $\hat{\beta}_{1}(u) = \hat{F}^{-1}_{M\mid X=1}\left(u\right) - \hat{F}^{-1}_{M\mid X=0}\left(u\right)$, $0<u<1$. For more general problems where $\gamma(u) \neq 0$, estimation can be based on simplex or interior point methods \citep{koenker_2005}. If the assumption of linearity of the quantile function does not hold, one can consider nonlinear quantile regression models \citep{koenker_park} or exploits the equivariance property of quantiles by applying a suitable transformation towards linearity \citep{geraci_jones_2015}. In any of the above cases, parametric assumptions on the functional form of $F$ are avoided in favour of weaker conditional quantile restrictions \citep{powell_1994}.

Parametric specifications of $F$ can also be of interest. An approach based on a mixture of normals is proposed by \cite{dominici_etal}. For instance, suppose that $M|(X=x,W=w) \sim \mathcal{N}(\beta_{0} + x\beta_{1} + w'\gamma, \sigma^{2})$, then $F_{M|X=x,W=w}^{-1}(u) = \beta_{0}(u) + x\beta_{1} + w'\gamma$, where $\beta_{0}(u) = \beta_{0} + \sigma\Phi^{-1}(u)$ and $\Phi$ denotes the standard normal distribution function. It follows that $q(u, x, w) = \beta_{1} + \rds w'\gamma/\rds x$ is constant with respect to $u$. That is, the quantile regression curves are simply vertical translations of one another. Moreover, under normal assumptions, the decomposition of the $u$-specific indirect effect given in \eqref{eq:20} would simplify to
\begin{equation*}
\rds R(x, \tilde{x}^\ast, u, w) = r(x,\tilde{x}^\ast, u,w)\cdot \frac{\beta_{1} + \rds w'\gamma/\rds x\big\rvert_{x^\ast = \tilde{x}^\ast}}{\sqrt{2 \pi}\sigma}\cdot \exp\left[-\frac{\left\{\Phi^{-1}\left(u\right)\right\}^{2}}{2}\right],
\end{equation*}
which could be regarded as a null hypothesis model. However, location-shift or even location--scale-shift effects only may fail to capture the complexity of the distributional relationships between variables.

\subsection{Modelling the outcome}
\label{sec:4.2}
Let us now consider the modelling of the function $R(x, x^\ast, u, w)$, $0<u<1$. For ease of exposition, we refer to the NCHS data analysis, where the focus is on mortality. We can estimate $R(x, x^\ast, u, w)$ by a sequence of mortality rates at different quantiles of a sample $\left\{m_{1},\ldots,m_{n}\right\}$ of observations of $M$, where $n$ denotes the sample size. Let $(u_{k-1}, u_{k}]$, $k = 1, \ldots, K$, be a sequence of bins that partition the unit interval, with $u_{0} = 0$ and $u_{K} = 1$ (the leftmost bin is treated as closed). Next, we classify the observations according to the distribution of $M\mid X=x^\ast,W=w$. If $\hat{F}_{M\mid X=x^\ast, W=w}(m_{i})$ falls in $(u_{k-1}, u_{k}]$, then we assign the $i$th observation to the $k$th bin. The latter task can be easily achieved by noting that $F_{M\mid X=x^\ast,W=w}$ is simply the inverse of the conditional quantile function. Using the midpoints $\bar{u}_{k}$, $k = 1, \ldots, K$, as representative of each class $(u_{k-1}, u_{k}]$, we can	
\begin{enumerate}
	\item estimate $\hat{F}_{M\mid X=x^\ast,W=w}^{-1}(\bar{u}_{k})$ in \eqref{eq:21};
	\item calculate the predicted values $\hat{F}_{i\mid X=x^\ast,W=w}^{-1}(\bar{u}_{k})$ and the absolute residuals $e_{ik} = \left|m_{i} - \hat{F}_{i\mid X=x^\ast,W=w}^{-1}(\bar{u}_{k})\right|$, $i = 1,\ldots,n$, $k = 1,\ldots,K$;
	\item for a given observation $i$, find the value $k$, $k = 1,\ldots,K$, such that $e_{ik}$ is smallest;
	\item assign the observation $i$ to the $k$th bin for the value $k$ determined in the previous step.
\end{enumerate}

Finally, let $n^{(k)}_{x}$ and $z^{(k)}_{x}$ be, respectively, the population at risk and the number of deaths in the $k$th bin by exposure status, with $\sum_{k=1}^{K}n^{(k)}_{x} = n_{x}$. In practice, the choice of $K$ might depend on obtaining a reasonable number of events in the bins. The rate $\hat{R}^{(k)}(x, x^\ast, u, w) = z^{(k)}_{x}/n^{(k)}_{x}$ is an estimate of the mortality risk for infants falling in the birthweight quantile class $(u_{k-1}, u_{k}]$ who were either unexposed ($x = 0$) or exposed ($x = 1$) to smoking.

As we have seen in \eqref{eq:20}, the derivative $r(x, \tilde{x}^\ast, u, w)$ is a component of the $u$-specific indirect effect in \eqref{eq:9} and, as we will see in the next section, plays an important role. If the risk is constant over $u$, then $r(x, \tilde{x}^\ast, u, w)  = 0$ and $\rds R(x, \tilde{x}^\ast, u, w)/\rds x^\ast = 0$ for any value of the other two components of the indirect effect. In our data analysis, we estimate $r(x, x^\ast, u, w)$ numerically with
\begin{equation}\label{eq:22}
\hat{r}(x, x^\ast, \tilde{u}, w) = \frac{\tilde{R}\left(x, x^\ast, \tilde{u} + \delta_{n}, w\right)-\tilde{R}\left(x, x^\ast,\tilde{u} - \delta_{n},w\right)}{2\delta_{n}},
\end{equation}
for some $\tilde{u}$, $\bar{u}_{1} < \tilde{u} < \bar{u}_{K}$, where $\tilde{R}\left(x, x^\ast, u, w\right)$ is an interpolation function (e.g., linear or spline) of the points $\left(\bar{u}_{k},\hat{R}^{(k)}(x, x^\ast, u, w)\right)$ and $\delta_{n} \xrightarrow{n \to \infty} 0$ is a suitably small bandwidth parameter (for example, see \eqref{eq:24} below). In a nonparametric approach, the calculation for $x^\ast = \tilde{x}^{\ast}$ can be obtained by taking a weighted average of $\hat{r}(x, 0, \tilde{u}, w)$ and $\hat{r}(x, 1, \tilde{u}, w)$, with weights proportional to the sample sizes of the two groups $n_{x}$, $x = 0,1$.

\sloppy Again, one can follow a parametric approach to the modelling of $R(x, x^\ast,u,w)$, with the added benefit of possibly obtaining  $r(x, x^\ast, u, w)$ analytically, in which case $r(x, \tilde{x}^\ast, u, w)$ can be obtained by replacing for $x^\ast = \tilde{x}^\ast$. Some alternatives are given by generalised linear models \citep{mccullagh_nelder}, additive models \citep{hastie} and additive models for location, scale, and shape \citep{rigby_2005}.

\subsection{Sparsity}
\label{sec:4.3}

We now consider the nonparametric estimation of the sparsity function $s(u,\tilde{x}^\ast, w)$, which concludes the discussion on estimation. A simple approach consists in using the difference quotient \citep{koenker_2005}
\begin{equation}\label{eq:23}
\hat{s}(u,x^\ast, w) = \dfrac{\hat{F}^{-1}_{M \mid X=x^\ast, W=w}\left(u + \epsilon_{n}\right) - \hat{F}^{-1}_{M \mid X=x^\ast, W=w}\left(u - \epsilon_{n}\right)}{2\epsilon_{n}},
\end{equation}
where $\epsilon_{n} \xrightarrow{n \to \infty} 0$ is the bandwidth parameter. To this end, we consider \citeauthor{bofinger_1975}'s (\citeyear{bofinger_1975}) bandwidth
\begin{equation}\label{eq:24}
\epsilon_{n} = n^{-1/5} \left[4.5 \left\{\phi\left(v\right)\right\}^{4}/(2v^{2} + 1)^2\right]^{1/5},
\end{equation}
where $v = \Phi^{-1}\left(u\right)$ and $\phi \equiv \Phi'$. Numerical adjustments need to be introduced as appropriate in those instances where $\hat{s}(u,x^\ast, w) \leq 0$. As before, the calculation for $x^\ast = \tilde{x}^{\ast}$ can be obtained from the weighted average of $\hat{s}(u,0, w)$ and $\hat{s}(u,1, w)$, with weights proportional to $n_{x}$, $x = 0,1$.

\subsection{Standard errors}
\label{sec:4.4}
In general, inference on the indirect effects and related components may be challenging. Bootstrap \citep{efron_tibshirani} represents a flexible method to derive standard errors and perform inference. Given the large size of the NCHS dataset (more than 11 million observations), we implemented the method by \cite{kleiner_2014}. The general idea is to sample without replacement $S$ subsets of size $b$ from the original dataset of size $n$, with $b < n$. Bootstrapping of the statistic of interest is then performed on each subset and the results are averaged across $S$ subsets. This strategy, called `bag of little bootstraps', greatly reduces the computing cost when $n$ is large \citep[see][for more details]{kleiner_2012,kleiner_2014}. The confidence intervals reported in the NCHS data analysis were obtained using $b=50$ replications.

\clearpage

\section{United States infant mortality}
\label{sec:5}

\subsection{Motivation for using $u$-specific effects}
\label{sec:5.1}
In this section we briefly discuss the functions $\calR_{x, x^\ast}(m \mid w)$ \eqref{eq:4} and $R_{x, \xi^{x^\ast}_u, x^\ast}(u \mid w)$ \eqref{eq:7} in more detail to clarify the advantages of using a quantile-based approach to mediation analysis and to motivate the application of such an approach to the NCHS birthweight data.

As shown in \eqref{eq:8}, the natural indirect effect can be obtained using either $\calR_{x, x^\ast}$ or $R_{x, \xi^{x^\ast}_u, x^\ast}$. In the former case, the integral in \eqref{eq:8} is a Riemann-Stieltjes integral; in the latter, it is a Lebesgue integral. Thus the only difference is whether or not observations are ranked before summation. It is argued that `[q]uantile thinking defines statistics as summation done by sorting (ranking) data before adding' \citep[][p.654]{parzen_2004}. As a result, the contrast between, say, $R_{1,\xi_u^{1}, 1}\left(u \mid w\right)$ and $R_{0,\xi_u^{0}, 0}\left(u \mid w\right)$ is done on the same footing since $F_{M(1)}(\xi_u^{1}) = F_{M(0)}(\xi_u^{0})$.

\begin{figure}
	\includegraphics[width = \textwidth]{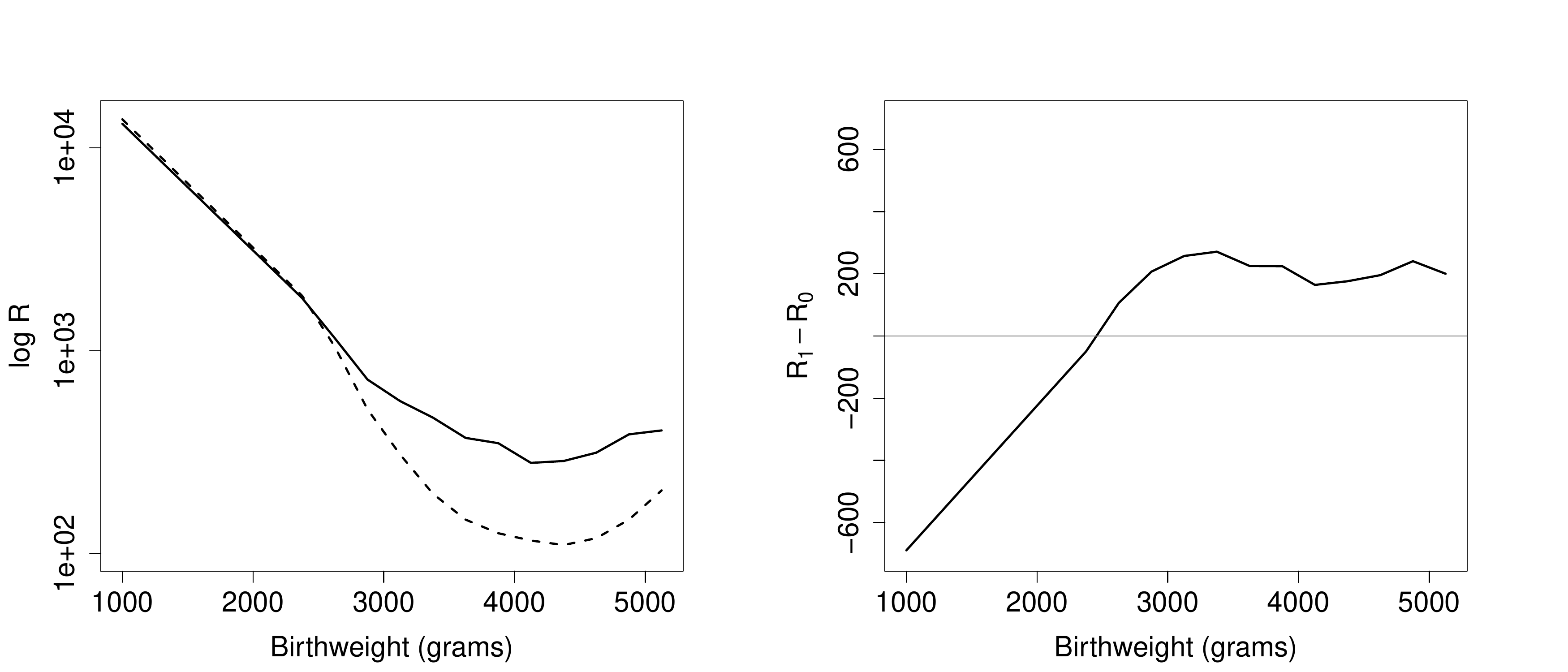}
	\caption{All-cause infant mortality (per $100,\!000$ livebirth-years) by birthweight in white singletons, National Center for Health Statistics data, United States 2001-2005. Left plot: estimates of $\calR_{0,0,m}$ (dashed line) and $\calR_{1,1,m}$ (solid line). Right plot: estimate of the difference $\calR_{1,1,m} - \calR_{0,0,m}$.}
	\label{fig:1}
\end{figure}

To exemplify, let's consider the relationship between maternal smoking, birthweight, and infant mortality as discussed in Section~\ref{sec:2}, and momentarily ignore pretreatment covariates. Under Assumption~\ref{assm:1} we have $F_{M(x^\ast)}=F_{M\mid X=x^\ast}$, $\calR_{x,x^\ast}\left(m\right) =\calR\left(x,m\right)$, for $x^\ast=0,1$, and $R_{x,\xi_u^{x^\ast}, x^\ast}\left(u \right)=R\left(x,x^\ast, u \right)$. The functions $\calR\left(1,m \right)$ and $\calR\left(0,m \right)$ describe the expected mortality risk of infants in the unexposed and exposed subgroups, respectively, as a function of birthweight. Figure~\ref{fig:1} shows their estimates per $100,\!000$ livebirth-years (phby)  on the logarithmic scale, along with their difference, for birthweights in the range $1000$ to $5250$ grams. As we can see in Figure~\ref{fig:1}, the curves $\calR\left(0,m \right)$ and $\calR\left(1,m \right)$ cross. In the past the cross-over of the curves, also referred to as `birthweight paradox', has been used to question the harmful effects of maternal smoking \citep{yerushalmy}. We argue that the birthweight paradox is related to the fact that, in general, $F_{M(0)}(m) \neq F_{M(1)}(m)$.  For instance, for the NCHS data we obtain $\hat{F}_{M\mid X=0}(2500) = 0.05$ and $\hat{F}_{M\mid X=1}(2500) = 0.10$. That is, the proportion of LBW infants born to mothers who smoked during pregnancy is twice the proportion of LBW infants born to mothers who did not smoke.

Figure~\ref{fig:2} shows the estimated mortality rates per $100,\!000$ livebirth-years as function of the quantile of $M \mid X=x^\ast$, $R\left(x,x^{\ast}, u\right)$, $x=x^\ast=0,1$, along with their difference. The birthweight paradox disappears using the quantile-based approach.

We can appreciate this fact when using Bayes' rule. Recall that for two events $A$ and $B$, $P\left(A\left|B\right.\right) = P\left(A \cap B\right)/P(B)$. If we consider the probabilities of the event $A\mid B$ under treatment and under control, we obtain 
\[
\frac{P_{1}\left(A\left|B\right.\right)}{P_{0}\left(A\left|B\right.\right)} = \frac{P_{1}\left(A \cap B\right)}{P_{0}\left(A \cap B\right)} \cdot \frac{P_{0}\left(B\right)}{P_{1}\left(B\right)},
\]
where $P_{x}(\cdot)$ expresses the probability of an event under treatment $x = 0,1$.
For instance, let $A$ be the event `death before one year of age' and $B$ the event `birthweight less than $m$ grams'. In all documented situations where the birthweight paradox arises, the birthweight distribution in the exposed population ($x = 1$) (e.g., infants born to smokers) is stochastically dominated by that in the unexposed population ($x = 0$) (e.g., infants born to nonsmokers) so that the ratio $P_{0}\left(B\right)/P_{1}\left(B\right) < 1$ throughout the distribution in the exposed population. For some values of $m$ (e.g., $m = 2500$ grams), the downplaying impact of this ratio prevails over $P_{1}\left(A \cap B\right)/P_{0}\left(A \cap B\right)$ so that the ratio of the conditional probabilities is smaller than 1. Now, the rank-ordered curves compare
\[
\frac{P_{1}\left(A\left|B\right.\right)}{P_{0}\left(A\left|B'\right.\right)} = \frac{P_{1}\left(A \cap B\right)}{P_{0}\left(A \cap B'\right)},
\]
with  $B'$ defined as `birthweight less than $m'$ grams', $m' \geq m$, so that $P_{0}\left(B'\right) = P_{1}\left(B\right)$.

\begin{figure}
	\includegraphics[width = \textwidth]{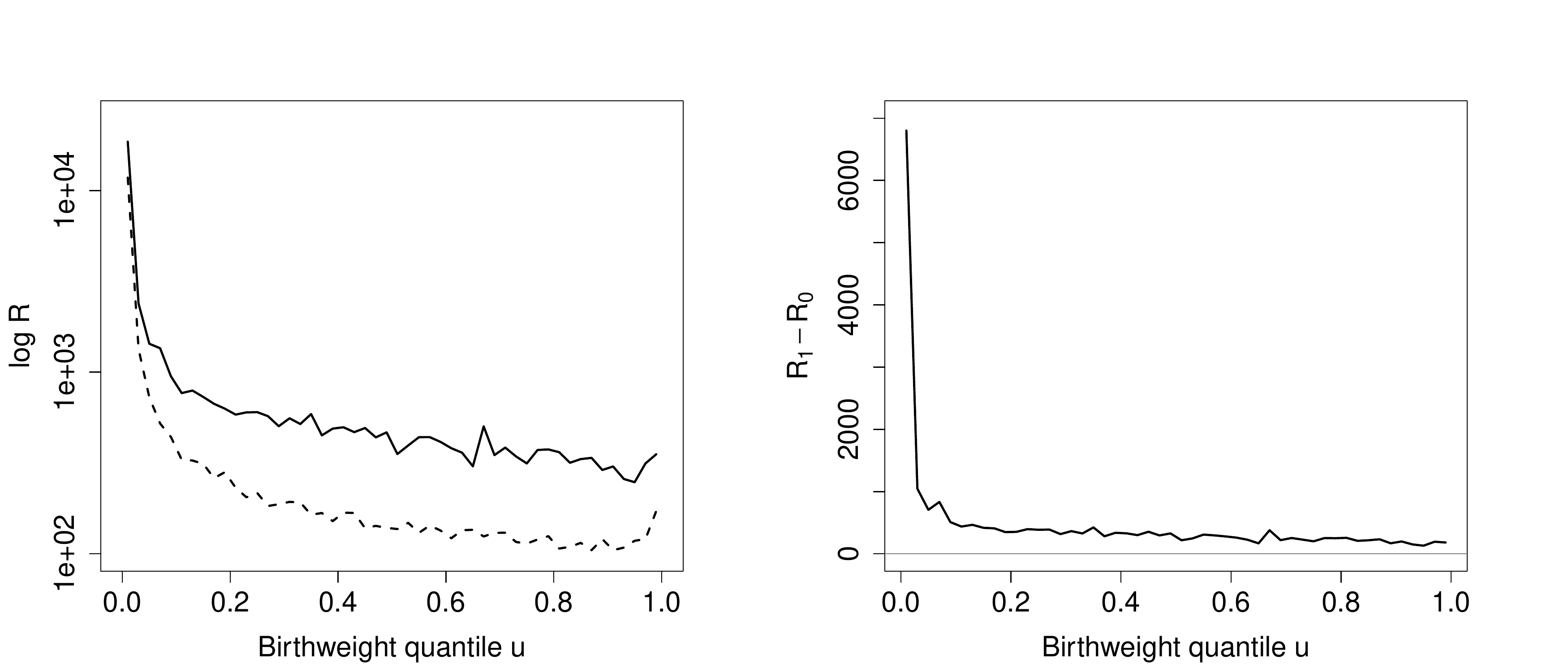}
	\caption{All-cause infant mortality (per $100,\!000$ livebirth-years) by birthweight quantile in white singletons, National Center for Health Statistics data, United States 2001-2005. Left: estimates of $R_{0,\xi_u^0,0}(u)$ (dashed line) and $R_{1,\xi_u^1,1}(u)$ (solid line). Right plot: estimates of the
	$u$-specific total effects $ACE_u=R_{1,\xi_u^1,1}(u) - R_{0,\xi_u^0,0}(u)$.}
	\label{fig:2}
\end{figure}

\subsection{All-cause mortality}
\label{sec:5.2}

In this section, we investigate infant mortality in the US population using the NCHS data introduced previously. We used birthweight as mediator ($M$) and maternal smoking ($X$) as exposure in the model $F^{-1}_{M\mid X=x}(u) = \beta_{0}(u) + x\beta_{1}(u)$. In the first of two analyses, we did not adjust the models for pretreatment variables $W$ for two reasons. First of all, we wanted to analyse the `birthweight paradox' as described in the literature \citep{wilcox_2001,hernandez_2006} to show that interesting findings may already emerge from our proposed decomposition in a preliminary analysis. Secondly, we argue that the analysis of all-cause mortality prevents any meaningful search of potential confounders given the heterogeneity of the outcome. Since it would be challenging to include in one model all the relevant factors, provided that they are known and available, in Section \ref{sec:4.2} we restrict the analysis to a specific cause of death, sudden infant death syndrome (SIDS), which allows us to limit the dimensionality of the model using as many known risk factors as possible that are available in the NCHS data.

After excluding cases with missing information on birthweight and maternal smoking, there were $11,\!590,\!581$ livebirths and $57,\!245$ deaths. Preliminarily, we examined the distributions of birthweight and gestational age in exposed and unexposed infants. While the former was shifted to the left as a consequence of the negative association between smoking and birthweight, the distribution of gestational age was similar in the two populations (results not shown). We do not discuss gestational age further in our analyses.

\begin{figure}[h!]
	\centering
	\includegraphics[scale= 0.5]{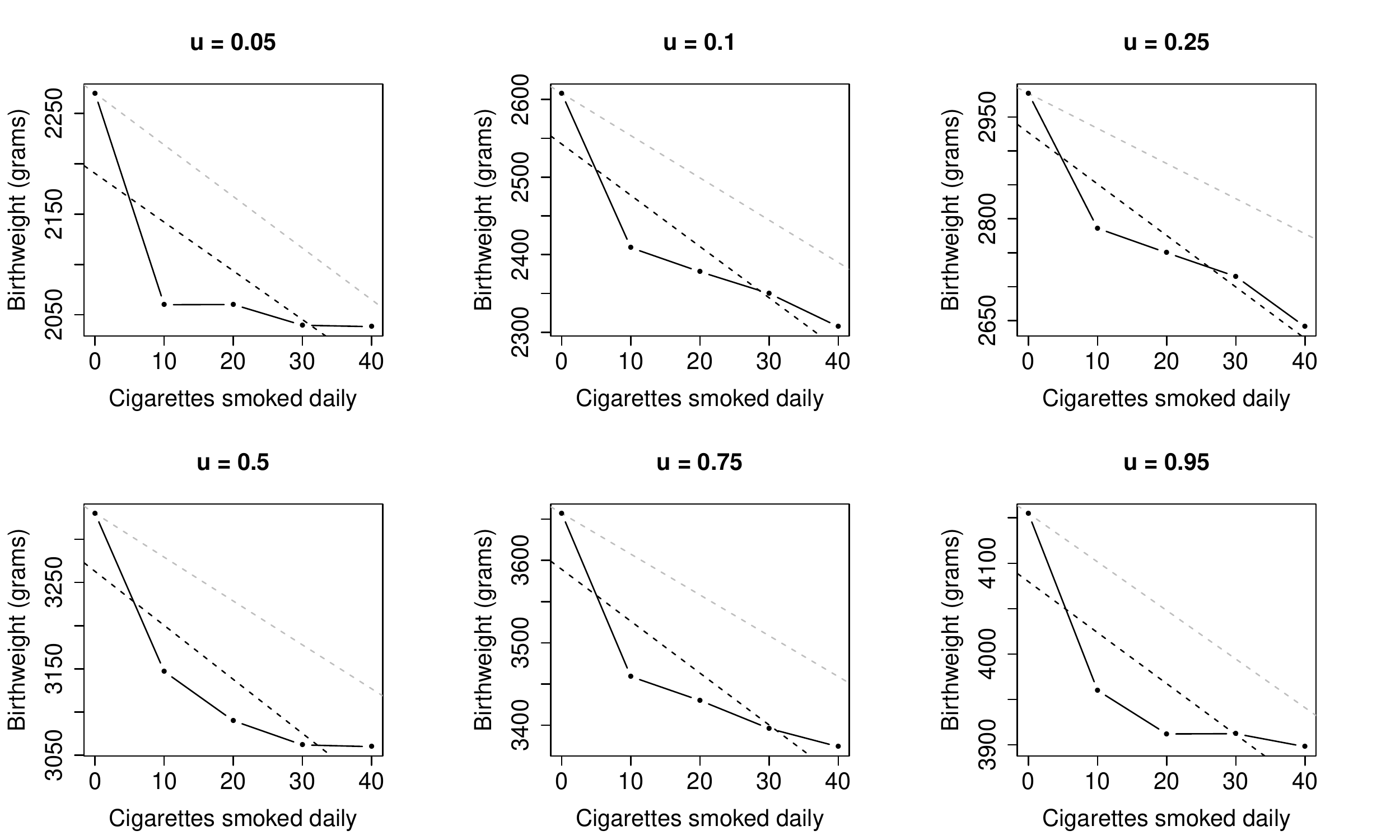}
	\caption{Birthweight quantiles and number of cigarettes smoked daily by mothers of white singletons, National Center for Health Statistics data, United States 2001-2005. Dots represent birthweight sample quantiles estimated at different levels of the exposure: 0, (0, 10], (10, 20], (20, 30], and (30, 40]. The black dashed line is an estimate of the linear dose-response relationship for a continuous exposure. The grey dashed line is an estimate of the linear relationship for a binary exposure.}
	\label{fig:3}
\end{figure}

We defined $K = 50$ intervals $(u_{k-1}, u_{k}]$ of width $0.02$, with $u_{0} = 0$ and $u_{50} = 1$. Death counts $z_{x}^{(k)}$ ranged from 52 to $23,\!999$ (median 333), totalling to $44,\!789$, in the unexposed group ($x = 0$) and from 28 to $5,\!072$ (median 134), totalling to $12,\!456$, in the exposed group ($x = 1$). The first quantile interval $(0, 0.02]$ alone accounted for, respectively, $54\%$ and $41\%$ of the overall number of deaths in the former and the latter group. The number of livebirths was $10,\!190,\!155$ in the unexposed group and $1,\!400,\!426$ in the exposed group. The estimated infant mortality rates are plotted in Figure \ref{fig:2}. Mortality in infants born to mothers who smoked during pregnancy is higher than mortality in the unexposed population across all birthweight quantiles, the former exceeding the latter by several thousands livebirth-years at lower quantiles. The risk difference becomes very small at higher birthweight quantiles.

We also investigated a dose-response relationship between birthweight and number of cigarettes smoked per day. The latter may be considered as a proxy of a latent exposure $\zeta^{\ast}$ (Proposition~\ref{theo:2}) measuring harmful substances that affect fetus growth as well as mortality risk. There was a negative gradient at all considered quantiles of birthweight (Figure~\ref{fig:3}). As compared to unexposed infants, the rate of decrease in birthweight was fastest for those exposed to up to 10 cigarettes smoked daily, and then approximately linear at higher levels of the exposure. The slope for the dichotomised exposure $X = I(\mbox{number of cigarettes} > 0)$ underestimates the change in birthweight at lower levels of the exposure, but it reasonably approximates the slope of an overall linear model for the dose-response relationship between smoking and birthweight.

The estimated $u$-specific indirect, direct, and total effects are shown in Figure~\ref{fig:4}. The indirect effect is highest at the lowest birthweight quantiles and decreases sharply with increasing $u$. More specifically, the $u$-specific indirect effect completely explains the total effect at lower $u$ but it plunges to null values near the right end of the birthweight distribution. In contrast, the $u$-specific direct effect plays a lesser role at lower quantiles than it does at higher quantiles, where it becomes the principal effect. The indirect, direct, and total effects, averaged over $u$, were equal to $243.4$, $209.9$ and $453.3$, respectively. In this case the mean effects alone do not make justice to the complexity of the picture shown in Figure~\ref{fig:4}.

The estimates of the components of the $u$-specific indirect effect as defined in \eqref{eq:20} are plotted in Figure~\ref{fig:5}. Note that the $95\%$ confidence intervals are generally narrow. The very high mortality risk differentials at lower quantiles follow from the combined effect of highly negative values of $\hat{r}(u)$ and the negative values of $\hat{\beta}_{1}(u)$, the birthweight quantile effect associated with smoking. The latter was close to the mean effect at minus $232$ grams for birthweights above the $8$th centile but, for lower quantiles, it showed larger magnitudes. However, the impact of these larger magnitudes on the estimate of the indirect effect was ultimately diminished by the higher sparsity at lower quantiles of birthweight.

\begin{figure}
	\centering
	\includegraphics[scale= 0.3]{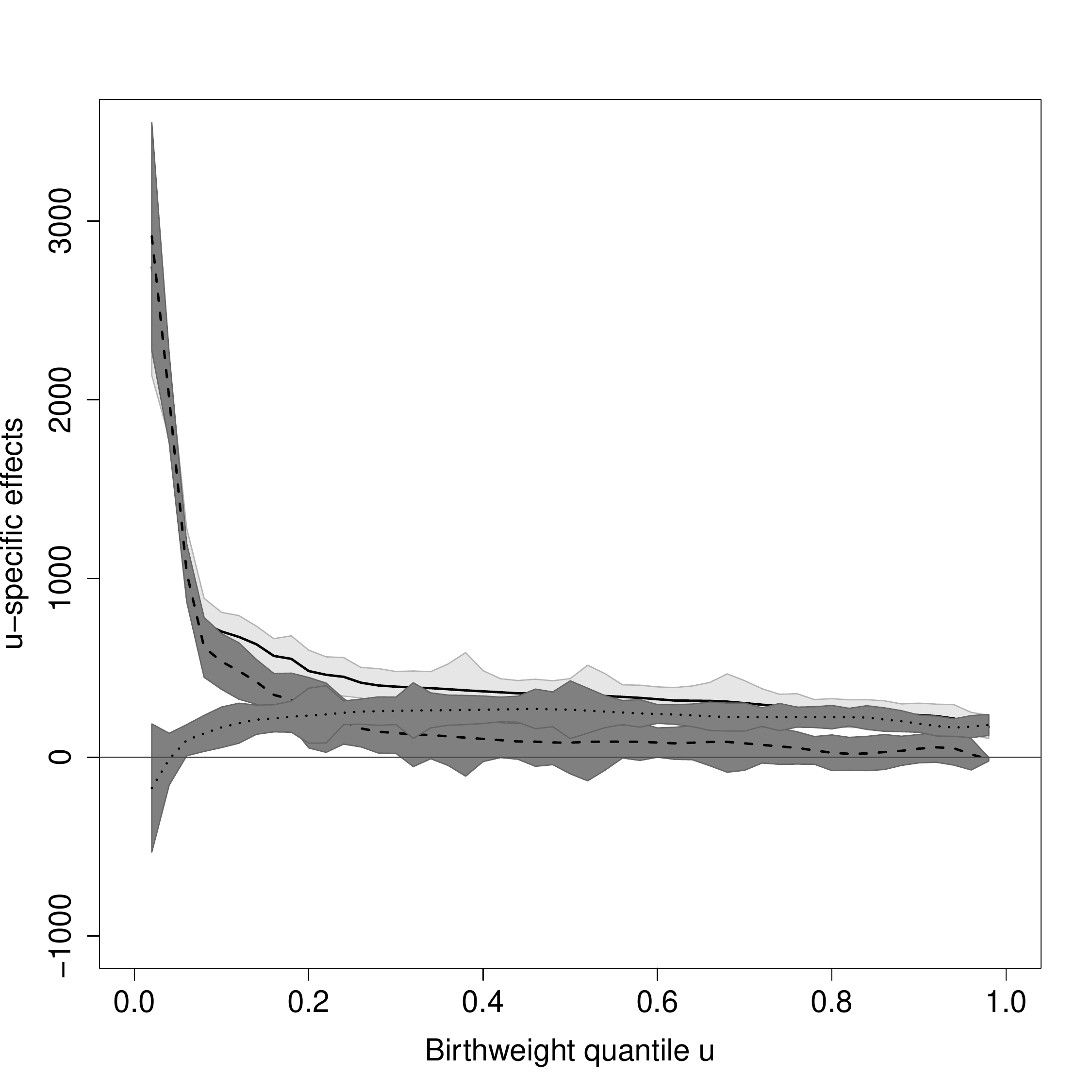}
	\caption{All-cause infant mortality (per $100,\!000$ livebirth-years) by birthweight quantile in white singletons, National Center for Health Statistics data, United States 2001-2005. Estimates of the $u$-specific indirect effect $NIE_{x = 1|u}$ (dashed line), direct effect $NDE_{x^{\ast} = 0|u}$ (dotted line), and total effect $ACE_{u}$ (solid line) of maternal smoking on infant mortality. Shaded grey areas depict $95\%$ pointwise confidence bands.}
	\label{fig:4}
\end{figure}

It has been suggested that the impact of smoking on mortality is independent of its effect on birthweight \citep{wilcox_2001}. Our results show that this is not the case, thus confirming the intuition by \cite{picciotto_2001}. This is explained by the fact that the effect of smoking on birthweight is not uniform across birthweight quantiles, nor is the effect of other well-known birthweight determinants \citep{geraci_2016}.

\begin{figure}
	\centering
	\includegraphics[width = \textwidth]{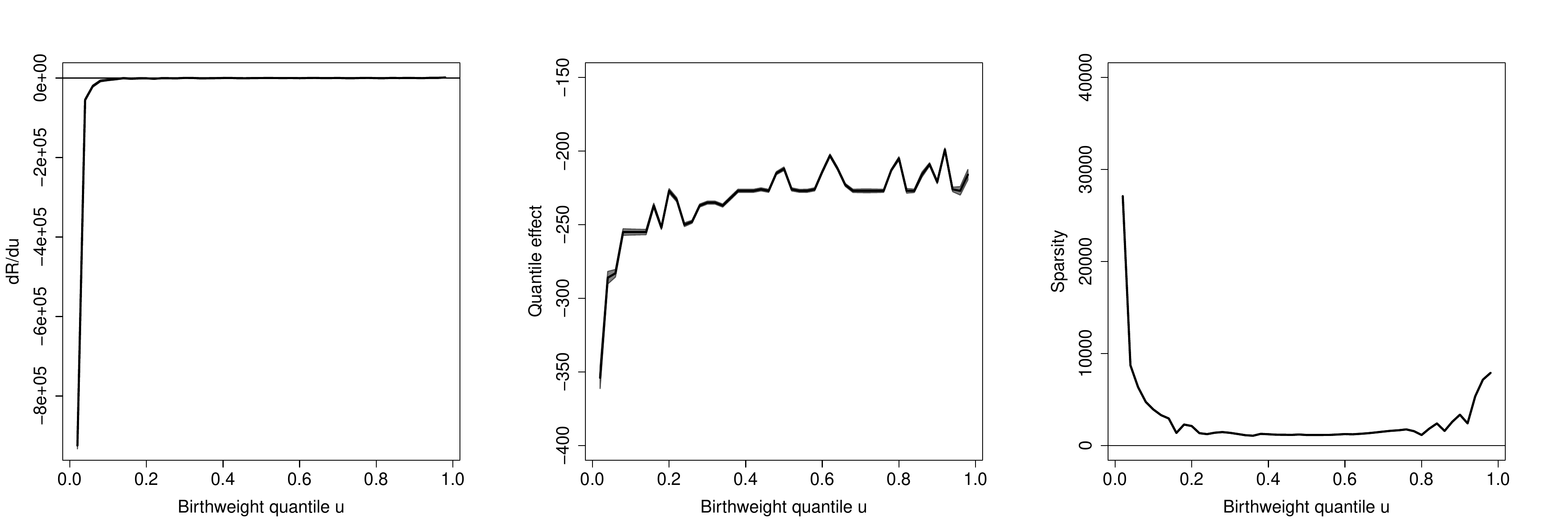}
	\caption{All-cause infant mortality (per $100,\!000$ livebirth-years) by birthweight quantile in white singletons, National Center for Health Statistics data, United States 2001-2005. Components of the estimated $u$-specific indirect effect. Shaded grey areas depict $95\%$ pointwise confidence bands.}
	\label{fig:5}
\end{figure}

As for the role of birthweight in relation to risk factors of infant mortality, our results point clearly towards the conclusion that indirect and direct effects need to be assessed at different quantiles of the birthweight distribution. The shift in birthweights plays a less important role at higher quantiles, which suggests that the effect of smoking may be acting through pathways not associated with birthweight.

It should be stressed that the interpretation of these results relies on the assumption that sequential ignorability (Assumption \ref{assm:1}) holds without conditioning on the covariates. However, it is likely that infants in the exposed and unexposed groups differ systematically in terms of maternal characteristics such as income, education, prenatal care received, alcohol consumption, as well as other factors associated with the incidence of smoking during pregnancy; all factors which are also associated with birthweight and infant mortality. As explained at the beginning of this section, we therefore restricted the analysis to SIDS mortality.

\subsection{Sudden infant death syndrome mortality}
\label{sec:5.3}

SIDS is defined as sudden death of an infant aged less than one year that remains unexplained after a thorough case investigation that includes an autopsy, a death scene investigation, and a review of the clinical history of the parents and the infant \citep{willinger_1991, shah_2006}. We calculated overall mortality rates by cause of death using the NCHS data. Between 2001 and 2005, SIDS was the first cause of mortality in US white singleton infants with a rate equal to $48.2$ phby, followed by extremely low birthweight or extreme immaturity ($45.1$ phby), and congenital heart malformations ($33.0$ phby). SIDS rate for infants born to smokers was $5$ times ($95\%$ confidence interval: $4.7$ to $5.3$) the rate for those born to nonsmokers.

As in the previous section, we modelled the quantiles of birthweight as a function of smoking. In addition, we controlled for additional SIDS risk factors ($W$) as reported in the literature \citep{schlaud_1996,leach_1999}. Under Assumption \ref{assm:1}, we considered the following linear regression model
\begin{align}\label{eq:25}
F^{-1}_{M\mid X=x, W=w}(u) = & \,\, \beta_{0}(u) + x\beta_{1}(u) + w_{1}\gamma_{1}(u) + w_{2}\gamma_{2}(u)\\
\nonumber & + w_{3}\gamma_{3}(u) + xw_{3}\gamma_{4}(u),
\end{align}
where $w_{1}$ is a dummy variable for whether prenatal care was received at any stage of the pregnancy (baseline: women who received prenatal care), $w_{2}$ is a dummy variable for maternal age at delivery less than 20 years, regardless of the total birth order (baseline: women aged 20 years or older), and $w_{3}$ is a dummy variable for alcohol consumption during pregnancy (baseline: women who did not consume alcohol). The parameter $\gamma_{4}(u)$ is associated with the interaction between smoking and alcohol. The quantile effect associated with smoking is then $\beta_{1}(u) + w_{3}\gamma_{4}(u)$.

\begin{figure*}
	\centering
	\includegraphics[scale = 0.3]{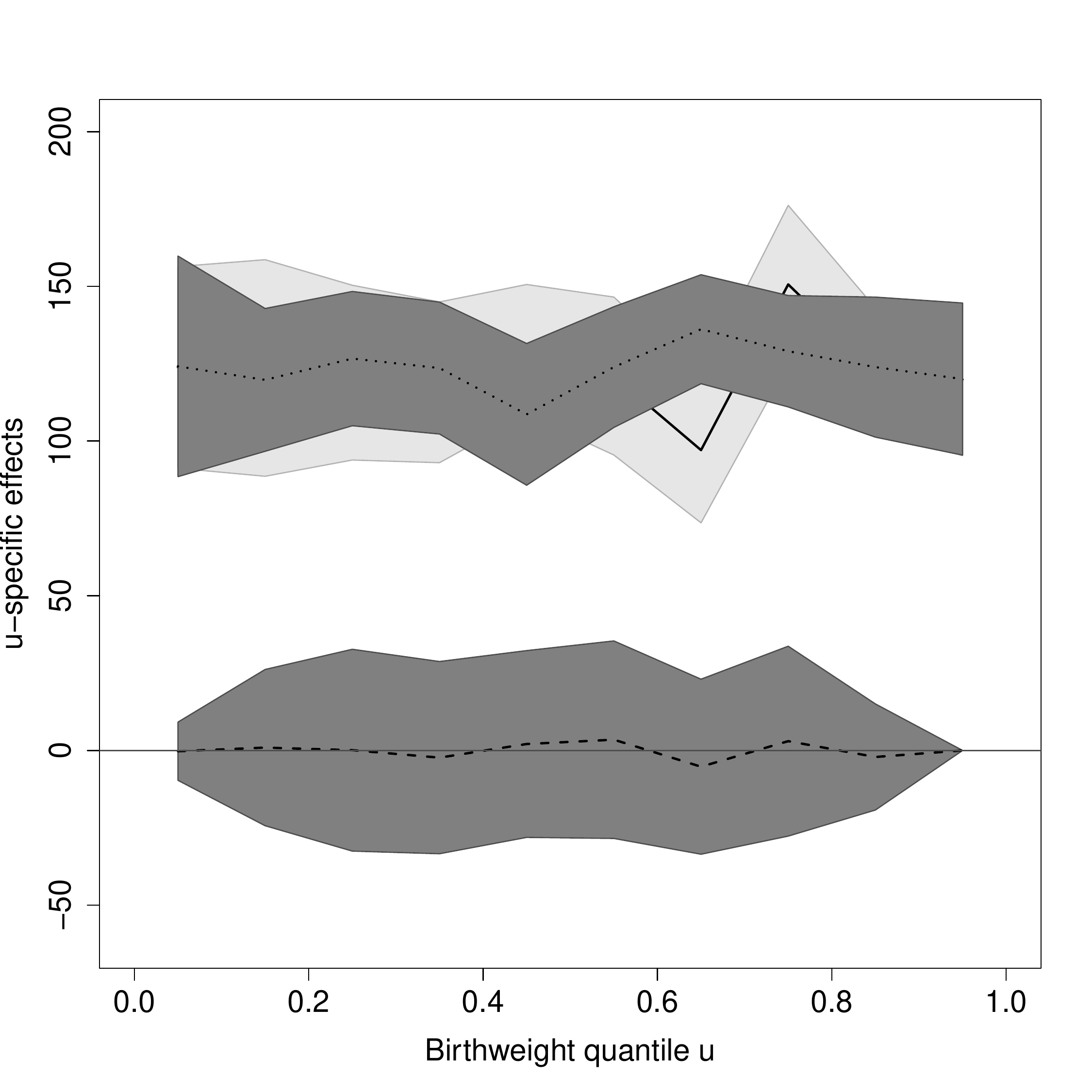}
	\caption{Sudden infant death syndrome mortality (per $100,\!000$ livebirth-years) by birthweight quantile in white singletons, National Center for Health Statistics data, United States 2001-2005. Estimates of the $u$-specific indirect effect $NIE_{x = 1|u}$ (dashed line), direct effect $NDE_{x^{\ast} = 0|u}$ (dotted line), and total effect $ACE_{u}$ (solid line) of maternal smoking on infant mortality. Shaded grey areas depict $95\%$ pointwise confidence bands.}
	\label{fig:6}
\end{figure*}

The risk of SIDS was calculated for decile intervals of birthweight, conditional on smoking, prenatal care, maternal age, alcohol consumption, and the interaction between the latter and smoking. Mortality was higher in exposed infants at all birthweight quantiles (results not shown).

The estimated $u$-specific indirect, direct, and total effects, integrated over the distribution of $W$, are shown in Figure~\ref{fig:6} for the first nine deciles of birthweight. The $u$-specific indirect effect is approximately constant across the entire birthweight distribution. The $95\%$ pointwise confidence intervals include zero at all values of $u$. As a consequence, the $u$-specific total effect is determined solely by the direct effect. These effects, too, are approximately constant across deciles. In other words, the downward shift in birthweights due to smoking, adjusted for other SIDS risk factors, has little or no bearing on the higher risk in the exposed population. The indirect, direct, and total effects, averaged over $u$, were equal to $-0.6$, $123.6$ and $123.0$, respectively. In this case the mean provides an exhaustive summary of these effects.

The estimated components of the $u$-specific indirect effects are plotted in Figure \ref{fig:7}. The estimate $\hat{r}(u)$ was approximately null at all deciles (i.e., the risk of SIDS was approximately constant across birthweight quantiles). In contrast, smoking was strongly and significantly associated with birthweight and its effect differed by birthweight quantile. Moreover, quantile effects were heterogeneous in relation to mothers' characteristics. Smoking and alcohol consumption combined, together with absence of prenatal care, determined a shift of the birthweight distribution equal to minus $1049$ grams at $u =  0.1$ and to minus $536$ grams at $u =  0.9$. In comparison, the shift caused by smoking alone was $75\%$ ($u =  0.1$) to $58\%$ ($u =  0.9$) smaller. For infants born to young women who did not receive prenatal care, smoking determined a shift of the birthweight distribution in between that provoked in the other two subpopulations.

\begin{figure*}
	\centering
	\includegraphics[width = \textwidth]{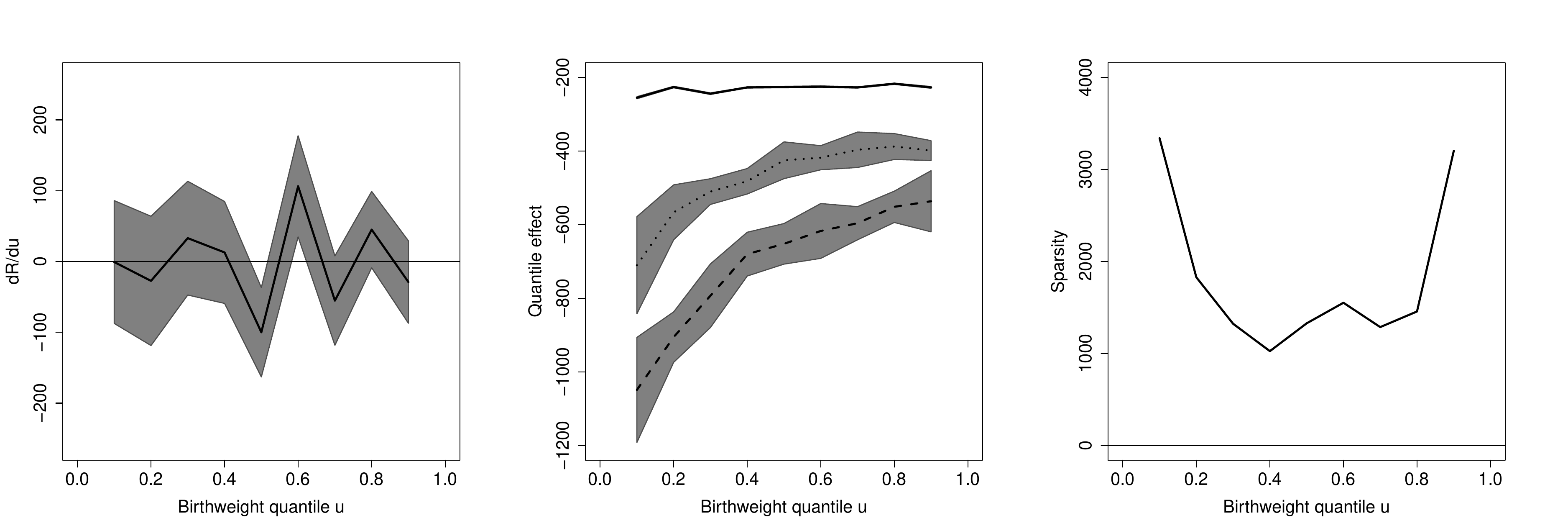}
	\caption{Sudden infant death syndrome mortality (per $100,\!000$ livebirth-years) by birthweight quantile in white singletons, National Center for Health Statistics data, United States 2001-2005. Components of the estimated $u$-specific indirect effect, with estimated quantile effect contrasting baseline (infants born to women aged 20 years or older who did not smoke or drink alcohol and who received prenatal care) and (i) infants born to women who smoked (solid line); (ii) infants born to women aged less than 20 years who smoked and who did not receive prenatal care (dotted line); (iii) infants born to women who smoked and consumed alcohol and who did not receive prenatal care (dashed line). Shaded grey areas depict $95\%$ pointwise confidence bands.}
	\label{fig:7}
\end{figure*}

\section{Final remarks}
\label{sec:6}
Expressing the outcome summary as a function of the quantiles of the mediator conditional on the exposure has two consequences: (i) firstly, the contrast between units in the unexposed and exposed populations is made on an equal footing; and (ii) secondly, under the assumption of sequential ignorability, it leads to a neat decomposition of the indirect effect which is shown to be proportional to the shift of the mediator's distribution associated with the exposure, the local density of the distribution, and the sensitivity of the risk to changes in the mediator's distribution. Complex effects are broken down into separate elements easier to understand and the relative importance of these components can be assessed. We restricted our attention to the case of an absolutely continuous mediator. Our proposed methods can be extended to the discrete case, although semi-parametric modelling of quantile functions for discrete responses is still an area that requires further research.

In our modelling approach, we discussed semi- and non-parametric methods, as these are able to reveal effects more complex than location-shift. We propose to assess the uncertainty of the estimates by means of the bootstrap, which has been also adopted in other studies focusing on quantiles of the outcome \citep{imai_2010a, shen_etal_2014}. We strongly recommend the method by \cite{kleiner_2014} when dealing with large datasets. The resampling step in the analyses of all-cause and SIDS mortality took about $100$ and $50$ minutes, respectively, on a $64$-bit operating system machine with $16$ Gb of RAM and quad-core processor at $2.93$ GHz.

Our methods have relevance for public health, in general, and child health, in particular. For example, our analysis showed that the indirect effect of smoking on infant mortality decreases steadily across the birthweight distribution, representing the entire total effect at lower quantiles of birthweight but becoming null at higher quantiles, where the direct effect prevails. Moreover, the large magnitude of the indirect effect in small infants is the result of the strong effect of smoking on lower quantiles of birthweight and the dramatic sensitivity of the mortality risk to small shifts of the birthweight distribution. However, confounding in the analysis of all-cause mortality is most certainly inevitable given the complexity and multitude of pre- and post-natal factors which are not available in the NCHS data. On the other hand, we found that increased SIDS risk due to prenatal exposure to smoking, adjusted for other birthweight-related risk factors, is not mediated by changes in the birthweight distribution. This result supports the hypothesis that smoking, which is an important birthweight determinant, does increase the risk of SIDS but not through pathways that involve birthweight. It cannot be excluded that infants exposed to smoking \textit{in utero} are exposed to smoking after birth as well \citep{shah_2006,usdept}. In contrast, LBW, which is considered a risk factor for SIDS \citep{hoffman_1988, blair_2006}, appears to be associated with an increased risk of SIDS possibly because of common prenatal factors, including smoking, that induce a spurious correlation.


\appendix
\section*{Appendix}
In this section, we prove Propositions~\ref{theo:1} and \ref{theo:2}.

\begin{proof}[Proof of Proposition~\ref{theo:1}]
Let's start by showing that under Assumption \ref{assm:1}, $\calR_{x, x^\ast}(m \mid  w)=\calR(x, m, w)$ and $R_{x,\xi^{x^\ast}_u,x^\ast}(u \mid w)=R(x,x^\ast,u,w)$.

For $m \in \R, \, x, x^\ast=0,1$, we have
\begin{align*}
\calR_{x,x^\ast}(m\mid w) &= \expect\left(Y(x,m) \mid M(x^\ast)=m, W=w\right) \\
&= \expect\left(Y(x,m) \mid X=x^\ast, M(x^\ast)=m, W=w\right) \\
&= \expect\left(Y(x,m) \mid X=x^\ast, W=w \right) \\
&= \expect\left(Y(x,m) \mid X=x, W=w \right) \\
&= \expect\left(Y(x,m) \mid X=x, M(x)=m, W=w\right) \\
&= \expect\left(Y  \mid X=x, M =m, W=w\right) = \calR(x,m, w),\\
\end{align*}
where the second and forth equalities  follow from ignorability of the treatment, the third and fifth equalities follow from ignorability of the mediator, and the last equality holds by consistency. 

Ignorability of the treatment implies that $F_{M(x^\ast) \mid W=w}(M(x^\ast)\mid W=w)= F_{M\mid X=x^\ast, W=w}(M\mid X=x^\ast, W=w)$ and
$\xi^{x^\ast}_{u} \equiv F^{-1}_{M(x^\ast)\mid W=w}(u)= F^{-1}_{M\mid X=x^\ast, W=w}(u) \equiv \xi_{u\mid x^\ast}$ for all $u \in (0,1)$. Therefore
$\calR_{x,x^\ast}(\xi_u^{x^\ast}\mid w)= \calR_{x,x^\ast}( \xi_{u\mid x^\ast}\mid w)$. Since $\calR_{x,x^\ast}(m \mid w) = \calR(x,m, w)$, for all $m \in \R$, $x, x^\ast=0,1$, we also have that $\calR_{x,x^\ast}(\xi_{u \mid x^\ast}\mid w)= \calR(x,\xi_{u\mid x^\ast}, w)$. Then,
\begin{align*}
R_{x,\xi_u^{x^\ast},x^\ast}(u \mid w) & = \calR_{x,x^\ast}(\xi_u^{x^\ast}\mid w) = \calR_{x,x^\ast}( \xi_{u\mid x^\ast}\mid w)=\calR(x,\xi_{u\mid x^\ast}, w)\\
& =  R(x,x^\ast,u,w),
\end{align*}
where the first and the last equalities follow from the identities \eqref{eq:9} and \eqref{eq:15}, respectively.

It follows that
\begin{eqnarray*}
NIE_{x|u,w} &\equiv& R_{x,\xi_u^{1},1}(u\mid w)-R_{x,\xi_u^{0},0}(u\mid w)\\
&=& R\left\{x,1,F_{M\mid X=1, W=w}\left(F^{-1}_{M\mid X=1, W=w}(u)\right),w\right\}
- \\&&  R\left\{x,0,F_{M\mid X=0, W=w}\left(F^{-1}_{M\mid X=0, W=w}(u)\right),w \right\}.
\end{eqnarray*}
\end{proof}

\begin{proof}[Proof of Proposition \ref{theo:2}]
Let $h(x^\ast; x, u, w) = \rds R(x,x^\ast,u,w)/\rds x^\ast$. Since $h$ is continuous, then according to the mean value theorem for definite integrals there exists some $\tilde{x}^\ast \in (0,1)$ such that
$$
\int_{0}^{1} h(x^\ast; x, u, w) \rds x^{\ast} = h(\tilde{x}^\ast; x, u, w).
$$
Thus we obtain
\begin{align*}
\int_{0}^{1} \left\{\frac{\rds R(x,x^\ast,u,w)}{\rds x^\ast}\right\} \rds x^{\ast} & = \frac{\rds R(x,x^\ast,u,w)}{\rds x^\ast}\bigg\rvert_{x^\ast = \tilde{x}^\ast}\\
& = R(x,1,u,w) - R(x,0,u,w),
\end{align*}
where the first equality follows from then mean value theorem and the second equality from the first fundamental theorem of calculus.
\end{proof}

By way of example, suppose $M|X,W \sim \mathcal{N}(\beta_{0} + \beta_{1} x + \beta_{2} w, \sigma(x))$, where $\beta_{0} = \beta_{2} = 0$, and $\sigma(x) = (1 + x)^{2}$. Then $F^{-1}_{M|X=x,W=w}(u) = \beta_{0}(u) + \beta_{1}(u)\cdot x$, where $\beta_{0}(u) = \Phi^{-1}(u)$ and $\beta_{1}(u) = \beta_{1} + \Phi^{-1}(u)$. Note that the slope of the quantile function of $M|X,W$ depends on $u$ since the model is heteroscedastic. Suppose also $\calR\left(x, m, w\right) = \exp(\alpha_{0} + \alpha_{1}x + \alpha_{2}m + \alpha_{3} w)$, where $\alpha_{0} = \alpha_{3} = 0$. For the sake of simplicity, we assume that there is no interaction between $W$ and $X$ or between $W$ and $M$.

Consider first $x^{\ast} \in \R$ and let
$$
R\left(x, x^\ast, u, w\right) = \exp\{\alpha_{1} x + \alpha_{2}\left(\beta_{0}(u) + \beta_{1}(u)\cdot x^\ast\right)\},
$$
Then we have
\begin{align*}
R\left(x, x^\ast, u, w\right) & = R\left(x, x^\ast, F_{M\mid X=x^\ast, W=w}\left(F^{-1}_{M\mid X=x^\ast, W=w}(u) \right), w\right)\\
& = \exp\left\{\alpha_{1} x + \alpha_{2}\left(\Phi^{-1}(\Phi\{\Phi^{-1}(u)\}) + [\beta_{1} + \Phi^{-1}(\Phi\{\Phi^{-1}(u)\})] x^{\ast} \right)\right\}
\end{align*}
as per \eqref{eq:15}, and
\begin{align*}
& \dfrac{\rds R\left\{x,x^\ast,F_{M\mid X=x^\ast, W=w}\left(F^{-1}_{M\mid X=x^\ast,W=w}(u)\right),w\right\}}{\rds x^\ast}\\
& \quad = \dfrac{\rds R(x,x^\ast, u, w)}{\rds u} \cdot  \dfrac{\rds F_{M\mid X=x^\ast, W=w}\left(F^{-1}_{M\mid X=x^\ast, W=w}(u) \right)}{\rds F^{-1}_{M\mid X=x^\ast, W=w}(u)} \cdot \dfrac{\rds  F^{-1}_{M\mid X=x^\ast, W=w}(u)  }{\rds  x^\ast}\\
& \quad = \underbrace{\frac{\alpha_{2}(1+x^{\ast})}{\phi\left(\Phi^{-1}(u)\right)}\exp\{\alpha_{1} x + \alpha_{2}\left(\beta_{0}(u) + \beta_{1}(u)\cdot x^\ast\right)\}}_{r(x,x^\ast, u, w)} \cdot \underbrace{\frac{1}{(1+x^{\ast})}\phi(\Phi^{-1}(u))}_{[s(u,x^\ast, w)]^{-1}} \cdot \underbrace{\beta_{1}(u)}_{q(u, x^\ast, w)}\\
& \quad = \alpha_{2}\beta_{1}(u)\exp\{\alpha_{1} x + \alpha_{2}\left(\beta_{0}(u) + \beta_{1}(u)\cdot x^\ast\right)\}
\end{align*}
as per \eqref{eq:17}. Note that $\alpha_{1}$ and $\alpha_{2}\beta_{1}(u)$ are, respectively, the direct and indirect effects of the exposure on the outcome.

Consider now a binary $x^{\ast}$ and take $x^\ast = 0$, $\rds x^{\ast} = 1$. The $u$-specific natural indirect effect is calculated as
\begin{align*}
NIE_{x\mid u,w} & = \exp\{\alpha_{1} x + \alpha_{2}\left(\beta_{0}(u) + \beta_{1}(u)\right)\} - \exp\{\alpha_{1} x + \alpha_{2}\beta_{0}(u)\}.
\end{align*}
According to Proposition~\ref{theo:2}, there exists $\tilde{x}^{\ast}$ such that
\begin{align*}
\alpha_{2}\beta_{1}(u)\exp\{\alpha_{1} x + \alpha_{2}\left(\beta_{0}(u) + \beta_{1}(u)\cdot x^\ast\right)\} & = \exp\{\alpha_{1} x + \alpha_{2}\left(\beta_{0}(u) + \beta_{1}(u)\right)\} \\
& \quad - \exp\{\alpha_{1} x + \alpha_{2}\beta_{0}(u)\}.
\end{align*}
It is easy to verify that the above equality is satisfied for
$$\tilde{x}^{\ast} = \frac{1}{\alpha_{2}\beta_{1}(u)}\log\left[\frac{1}{\alpha_{2}\beta_{1}(u)}(\exp\{\alpha_{2}\beta_{1}(u)\} - 1)\right],$$
for $\alpha_{2}\neq 0$, $u \in (0,1)$ and $u \neq \Phi(-\beta_{1})$.

\end{document}